\newtheorem{theorem}{Theorem}
\newtheorem{corollary}{Corollary}
\newtheorem{lemma}{Lemma}
\newtheorem{proposition}{Proposition}
\theoremstyle{definition}
\newtheorem{definition}{Definition}
\theoremstyle{remark}
\newtheorem*{remark}{Remark}
\newcommand{\numberthis}{\stepcounter{equation}\tag{\theequation}}
\newcommand{\labelsign}[2]{\overset{\mathclap{\textrm{(#1)}}}{#2}}
\pgfplotsset{compat=1.17}
\begin{document}

\title{Over-the-Air Statistical Estimation}

\author{Chuan-Zheng~Lee,
        Leighton~Pate~Barnes,
        and
        Ayfer~\"Ozg\"ur,~\IEEEmembership{Senior~Member,~IEEE}%
\thanks{The authors are with the Department of Electrical Engineering, Stanford
        University, Stanford, CA 94305 USA (email: czlee@stanford.edu;
        lpb@stanford.edu; aozgur@stanford.edu).}%
}

\maketitle

\begin{abstract}
    We study schemes and lower bounds for distributed minimax statistical estimation over a Gaussian multiple-access channel (MAC) under squared error loss, in a framework combining statistical estimation and wireless communication. First, we develop ``analog'' joint estimation-communication schemes that exploit the superposition property of the Gaussian MAC and we characterize their risk in terms of the number of nodes and dimension of the parameter space. Then, we derive information-theoretic lower bounds on the minimax risk of any estimation scheme restricted to communicate the samples over a given number of uses of the channel and show that the risk achieved by our proposed schemes is within a logarithmic factor of  these lower bounds. We compare both achievability and lower bound results to previous ``digital'' lower bounds, where nodes transmit errorless bits at the Shannon capacity of the MAC, showing that estimation schemes that leverage the physical layer offer a drastic reduction in estimation error over digital schemes relying on a physical-layer abstraction.
\end{abstract}

\begin{IEEEkeywords}
    federated learning, over-the-air learning, statistical estimation
\end{IEEEkeywords}

\section{Introduction}

To fully appreciate the plenitude of data fueling the modern rise of machine learning, we might pause to consider not just its volume, but its origins.  While the computational lifting is often concentrated in powerful central servers, the data they rely on is largely and increasingly born ``at the edge''---spawned in a myriad of devices, scattered ubiquitously, equipped with sensors and user input to collect all sorts of information from the world.

Recognizing this growing decentralization of data, there has been growing interest in the study of techniques to combine samples from many nodes to make inferences. The key distinction between this and traditional approaches to learning and estimation is an explicit consideration of the communication channels between edge devices and the central server, which are often noisy or unreliable channels, like wireless links.  Two disciplines with decades-long histories, wireless networks and statistical estimation, thus find themselves with common cause.

Recent years have seen significant activity in this nascent intersection. One simple and intuitive way to study the impact of bandwidth-limitations on estimation performance is to model them as imposing constraints on the number of bits available to encode each observed sample. A number of works in the machine learning literature have taken this tack \cite{duchi,garg,braverman,diakonikolas,archayaetal,barnes-lowerbounds}, providing both achievable schemes and information-theoretic lower bounds under bit constraints, and characterizing the dependence of the estimation error on the number of bits available to represent each sample. Since these analyses assume the encoded bits to be received without error, they can be interpreted as assuming that a reliable scheme is used for transmission over the underlying noisy channels. This in effect suggests an abstraction layer, separating the question of physical-layer communication from the statistical estimation problem.

Our goal in this paper is to study the problem of distributed statistical estimation over a noisy multiple-access channel from first principles. Our main contributions are as follows.

First, we introduce a model for minimax parameter estimation over a fixed number of uses of the Gaussian multiple-access channel (MAC), defining the estimation schemes of interest in this setting and providing a rigorous and tractable mathematical model for quantifying their performance and studying their optimality.

Second, we develop analog transmission-estimation schemes for two canonical estimation tasks, the Gaussian location and product Bernoulli models, in which each node scales and transmits its uncoded sample to the central server, leveraging the superposition of the Gaussian MAC to perform averaging over the air.  We analyze these schemes and characterize their risk under squared error loss. When compared to information theoretic \textit{lower} bounds for digital schemes, controlling for physical resources, we find that the worst-case risk of these analog schemes is exponentially smaller than that of any digital scheme. This suggests that analog schemes that consider estimation and physical layer transmission jointly can bring about drastic improvements over digital schemes that separate the two with an abstraction layer.

We next address the question of whether the analog schemes we develop are close to optimality. We derive a fundamental lower bound on the risk achievable by any estimation-communication scheme satisfying the physical constraints of our model. This bound uses the recent result of \cite{barnes-fisher-arxiv}, which showed an upper bound on the Fisher information of a channel's output in terms of the channel's Shannon mutual information when the statistical model being estimated exhibits a sub-Gaussian score. We apply this result to the aforementioned two estimation tasks, and find the risk achieved by our estimation schemes to be within a logarithmic factor of the lower bound.  To the best of our knowledge, this is the first information-theoretic lower bound for distributed minimax estimation over a noisy multi-user channel.

\subsection{Related works}

Our results corroborate similar gains from uncoded analog transmission that have been observed in source coding for sensor networks \cite{gastpar-TIT-2008,gastpar-IPSN-2003}. They also reinforce a number of recent works proposing adaptations of common learning algorithms in a wireless federated learning context. A notable example is distributed stochastic gradient descent, where leveraging the Gaussian MAC for analog averaging has been observed experimentally to far outperform digital approaches \cite{amiri-sgd-air,amiri-fedlearning-fading,amiri-convergence-of-downlink-arxiv}. On the theoretical side, analysis of methods using analog over-the-air gradient aggregation has shown convergence rates similar to error-free channels \cite{sery-gd-mac-fading,sery-cotaf-globecom2020,sery-heterogeneous-data-arxiv}, albeit without comparison to digital counterparts.

Other works with analog aggregration showing an advantage over digital methods include \cite{zhu-broadband-latency}, which replaced digital with analog modulation of model parameters to improve latency with comparable accuracy, and \cite{du-fast-analog-transmission}, which used an analog method with MIMO antennas. On the other hand, \cite{ahn-federated-distillation} found the digital and analog schemes it studied to perform comparably in numerical experiments. Such analog methods have seen wider development efforts \cite{yang-beamforming,guo-gradient-aggregation}, including with privacy considerations in mind \cite{seif-wireless-fl-analog-privacy,abdi-random-linear-coding,barnes-ldp,liu-privacy-free}. This idea has also been applied to over-the-air computation more generally \cite{wu-stac,liu-ota-scaling-laws,dong-blind-aircomp,zang-aircomp,chen-aircomp}. While the current literature supports the utility of analog schemes, none of these works provide an explicit analytical comparison from first principles between analog achievability and digital lower bounds for canonical estimation and inference tasks.

More broadly, there has been interest in federated learning over wireless networks from a range of angles. One idea to preserve bandwidth is to take advantage of gradient sparsity \cite{sun-sparsification,jeon-mimo-compressed-sensing,amiri-sgd-air}. Other angles include quantization \cite{chang-mac-quantization,tegin-low-resolution-adc-dac-arxiv}, incentive mechanisms \cite{khan-incentives}, power control \cite{zhang-power-control,liu-privacy-free} and optimization \cite{tran-time-energy,chen-convergence-time,chen-joint-learning-communications}.

\subsection{Structure of paper}

The rest of this paper is structured as follows. In Section~\ref{sec:problem-statement} we define the problem and introduce the definition of a minimax estimation scheme in this setting. We summarize our main results for achievability in Section~\ref{sec:main-results-analog-achievability} and discuss how they compare to existing digital lower bounds in Section~\ref{sec:discussion-and-comparison}. We then derive our lower bounds and discuss them in Section~\ref{sec:main-results-analog-lower-bounds}. Proofs of our results are in Sections~\ref{sec:proofs-of-analog-achievability} (achievability) and \ref{sec:proofs-of-analog-lower-bounds} (lower bounds). We validate our results with simulations in Section~\ref{sec:simulations} and conclude in Section~\ref{sec:conclusions}.

\section{Problem Statement}
\label{sec:problem-statement}

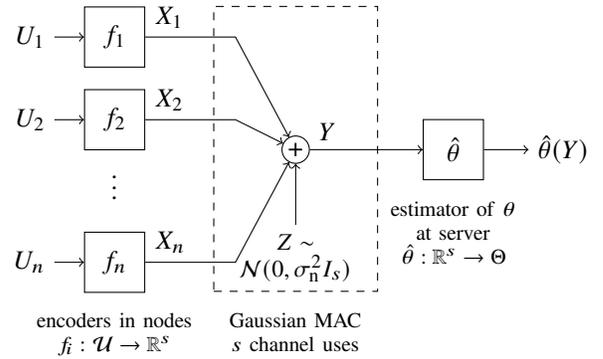
\begin{figure}
    \centering
%
%
%
\begin{tikzpicture}
    \tikzset{block/.style={draw, minimum size=8mm}, node distance=11mm}
    \node[block] (f1) {$f_1$};
    \node[block, below of=f1] (f2) {$f_2$};
    \node[below=0cm of f2] (fdots) {$\vdots$};
    \node[block, below of=fdots] (fn) {$f_n$};
    \draw[->] (f1) ++(-0.8,0) node[anchor=east] (u1 label) {$U_1$} -- (f1);
    \draw[->] (f2) ++(-0.8,0) node[anchor=east] (u2 label) {$U_2$} -- (f2);
    \draw[->] (fn) ++(-0.8,0) node[anchor=east] (un label) {$U_n$} -- (fn);
    \coordinate (midway) at ($(f2)!0.5!(fdots)$);
    \coordinate[right=1.2cm of f1] (spacer 1);
    \coordinate[right=8mm of spacer 1] (adder vertical);
    \node[circle,draw,inner sep=1pt] (adder) at (adder vertical |- midway) {+};
    \draw[->] (f1) -- (f1 -| spacer 1) -- (adder);
    \draw[->] (f2) -- (f2 -| spacer 1) -- (adder);
    \draw[->] (fn) -- (fn -| spacer 1) -- (adder);
    \node[anchor=south west] at (f1.east) {$X_1$};
    \node[anchor=south west] at (f2.east) {$X_2$};
    \node[anchor=south west] at (fn.east) {$X_n$};
    \draw[->] (adder) ++(0,-1)
        node[anchor=north, align=center, font=\small] (z label)
        {$Z \sim $ \\ $\mathcal{N}(0, \sigma_\mathrm{n}^2 I_s)$} -- (adder);
    \node[block, right=1.5cm of adder] (theta hat) {$\hat\theta$};
    \draw[->] (adder) -- (theta hat);
    \node[anchor=south west] at (adder.east) {$Y$};
    \draw[->] (theta hat) -- ++(1,0) node[anchor=west] {$\hat\theta(Y)$};
    \node[below=1ex of theta hat, anchor=north, align=center, font=\footnotesize] (estimator label) {
        estimator of $\theta$ \\
        at server \\
        $\hat\theta: \mathbb{R}^s \rightarrow \Theta$
    };

    \coordinate[left=9mm of adder] (mac left);
    \coordinate[right=9mm of adder] (mac right);
    \draw[dashed] (z label.south -| mac left) rectangle (f1.north -| mac right);
    \node[below=1ex of z label.south, anchor=north, align=center, font=\footnotesize] (mac label) {Gaussian MAC \\ $s$ channel uses};
    \node[anchor=north, align=center, font=\footnotesize] (encoder label) at (mac label.north -| fn) {encoders in nodes \\ $f_i: \mathcal{U} \rightarrow \mathbb{R}^s$};
\end{tikzpicture}
    \caption{System diagram}
    \label{fig:system-diagram}
\end{figure}

We study statistical estimation over a Gaussian multiple-access channel, a system diagram of which is in Fig.~\ref{fig:system-diagram}. In each channel use $t = 1, \dots, s$, each of $n$ senders transmits a symbol $X_{1t}, \dots, X_{nt} \in \mathbb{R}$ to the central server (which we interchangeably refer to as the \emph{receiver}), which receives a noisy superposition
\begin{equation}
    \label{eq:gaussian-mac}
    Y_t = X_{1t} + X_{2t} + \cdots + X_{nt} + Z_t,
\end{equation}
where $Z_t \sim \mathcal{N}(0, \sigma_\mathrm{n}^2)$ is the noise in the $t$th channel use. We assume an average power constraint $P$ on each sender,
\begin{equation}
    \frac1s \sum_{t=1}^s \mathbb{E} [X_{it}^2] \le P, \quad \text{for all }i = 1, \dots, n,
    \label{eq:power-constraint}
\end{equation}
where the expectation is over whatever randomness might exist in $X_{it}$, which we will make more precise shortly.

This system has the following estimation task: Each of the $n$ senders has an i.i.d.\ sample $U_i,\ i = 1, \dots, n$, from an unknown distribution $p_\theta$ on an alphabet $\mathcal{U}$, belonging to a parameterized family of distributions $\mathcal{P} = \{p_\theta: \theta \in \Theta\}$ with parameter space $\Theta \subseteq \mathbb{R}^d$. We use the notation $\mathbb{E}_\theta[\cdot]$ to mean expectation under the distribution $p_\theta$. The goal of the receiver is to estimate $\theta$ given $Y \triangleq (Y_1, \dots, Y_s)$.

To complete this task, each sender $i$ encodes its sample using a function $f_i: \mathcal{U} \rightarrow \mathbb{R}^s$ to produce $X_i \triangleq (X_{i1}, \dots, X_{is}) = f_i(U_i)$. The receiver, which knows the encoding functions, uses an estimator $\hat\theta(Y)$. We thus define how an estimation is carried out.
\begin{definition}
    An \emph{estimation scheme} for $s$ channel uses is a pair $(\mathbf{f}, \hat\theta)$ comprising $n$ encoding functions $\mathbf{f} = (f_1, \dots, f_n)$, where $f_i : \mathcal{U} \rightarrow \mathbb{R}^s$ is used by sender $i$, and an estimator function $\hat\theta : \mathbb{R}^s \rightarrow \Theta$ used by the receiver.
\end{definition}

We are now in a position to elaborate on the average power constraint in \eqref{eq:power-constraint}. The distribution of $X_i$ depends (via $f$) on $p_\theta$, which is not known in advance. We therefore require that schemes respect this power constraint for every $\theta \in \Theta$, that is, that the encoding functions $\{f_i\}$ satisfy
\begin{equation}
    \frac1s \mathbb{E}_\theta\!\left[\lVert f_i(U_i) \rVert_2^2\right] \le P, \quad \text{for all } i \in \{1, \dots, n\}, \theta \in \Theta.
    \label{eq:power-constraint-in-scheme}
\end{equation}

To evaluate possible schemes, we study risk under squared error loss, with the goal of minimizing the squared error $\mathbb{E}_\theta \lVert\hat\theta(Y) - \theta\rVert_2^2$. If we fix the encoding functions $\mathbf{f}$, all that remains is to choose an estimator function $\hat\theta$. We can understand these estimators using the same frameworks as in classical statistics; the difference is that our estimator can access only $Y$, not the samples $\{U_i\}$. In particular, when $\mathbf{f}$ is fixed, we will call an estimator \emph{minimax} if it minimizes the worst-case risk (over $\theta \in \Theta$).

In our context, it is natural to extend this idea to schemes. When referring to the risk of a scheme $(\mathbf{f}, \hat\theta)$, we mean the risk when that scheme is used. To remind ourselves that this also depends on the encoding functions $\mathbf{f}$, we write the risk as $R(\theta; \mathbf{f}, \hat\theta) = \mathbb{E}_\theta \lVert\hat\theta(Y) - \theta\rVert_2^2$, with $\mathbf{f}$ being implicit on the right-hand side. We can then extend minimaxity to schemes.

\begin{definition}
    Consider a class $\mathcal{S}$ of estimation schemes for $s$ channel uses. A scheme $(\mathbf{f}_\mathrm{M}, \hat\theta_\mathrm{M})$ is \emph{minimax} for $\mathcal{S}$ if it minimizes the maximum risk among all those schemes in $\mathcal{S}$ that also satisfy the power constraint \eqref{eq:power-constraint-in-scheme}. That is, if $\mathcal{S}_P$ is the subset of $\mathcal{S}$ satisfying $\eqref{eq:power-constraint-in-scheme}$, then a scheme $(\mathbf{f}, \hat\theta)$ is minimax if it satisfies
    \begin{equation}
        \inf_{(\mathbf{f}, \hat\theta) \in \mathcal{S}_P} \sup_\theta R(\theta; \mathbf{f}, \hat\theta) = \sup_\theta R(\theta; \mathbf{f}_\mathrm{M}, \hat\theta_\mathrm{M}).
    \end{equation}
\end{definition}

Where a scheme's encoding functions are the same for all nodes, $f_i = f$ for all $i = 1, \dots, n$, we will abuse notation by writing the common encoding function $f$ in place of the collection $\mathbf{f}$, for example, $R(\theta; f, \hat\theta) \triangleq R(\theta; \mathbf{f}, \hat\theta)$.

The main achievability results in Section~\ref{sec:main-results-analog-achievability} and the lower bound corollaries in Section~\ref{sec:main-results-analog-lower-bounds} are concerned with two cases of the general problem. The first is the \textbf{Gaussian location} model, in which $p_\theta = \mathcal{N}(\theta, \sigma^2 I_d)$, with $\mathcal{U} = \mathbb{R}^d$ and $\Theta = \{\theta \in \mathbb{R}^d: \lVert\theta\rVert_2 \le B\sqrt d\}$ for some known $B > 0$. The goal of the receiver is to estimate the mean $\theta$ of the multivariate Gaussian distribution with known covariance matrix $\sigma^2 I_d$.

The second is the \textbf{product Bernoulli parameter} model, in which $p_\theta = \prod_{j=1}^d \mathrm{Bernoulli}(\theta_j)$, with $\mathcal{U} = \{0, 1\}^d$ and $\Theta = [0, 1]^d$. The goal of the receiver is to estimate the mean $\theta$ of the Bernoulli distribution.

We note that the gradient aggregation problem in distributed stochastic gradient descent, a key part of federated machine learning, can be cast as a distributed parameter estimation problem of this type; see \textit{e.g.} \cite{barnes-rtopk}.

\section{Results for Analog Achievability}
\label{sec:main-results-analog-achievability}

We develop linear estimation schemes for the Gaussian and Bernoulli mean estimation tasks described in Section~\ref{sec:problem-statement}. A defining feature of these schemes is their \textit{analog} nature: they simply scale and transmit their samples to the central server in an uncoded fashion. This is in contrast to a digital approach where samples are encoded with a finite number of bits, which are then reliably communicated to the server using channel coding techniques. This analog approach allows us to exploit the additive nature of the Gaussian MAC to average the statistical samples over the air. The following theorems characterize the risk of these analog schemes.

\begin{theorem}
    \label{thm:gaussian-location-minimax-scheme}
     In the Gaussian location model, consider the class of all estimation schemes for $d$ channel uses, and using a scale-and-offset encoding function common to all senders $f(u) = \alpha u + \beta$ for some $\alpha \in \mathbb{R}, \beta \in \mathbb{R}^d$ (and any estimator function). The minimax scheme is given by the choice
    \begin{equation}
        \label{eq:gaussian-location-minimax-encoding-function}
        f_\mathrm{M}(u) = \sqrt{\frac{P}{B^2 + \sigma^2}} u,
        \quad
        \hat\theta_\mathrm{M}(Y) = \frac1n \sqrt{\frac{B^2 + \sigma^2}{P}} Y,
    \end{equation}
    and yields the minimax risk
    \begin{equation}
        \label{eq:gaussian-location-minimax-scheme-risk}
        \sup_\theta R(\theta; f_\mathrm{M}, \hat\theta_\mathrm{M}) = \frac{d \sigma^2}{n} \left[1 + \frac{\sigma_\mathrm{n}^2}{n P} \left(1 + \frac{B^2}{\sigma^2}\right)\right].
    \end{equation}
\end{theorem}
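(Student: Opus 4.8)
The plan is to reduce the problem, for any fixed scale-and-offset encoder, to a standard Gaussian mean-estimation problem and then optimize the free parameters $(\alpha,\beta)$ against the power constraint. First I would substitute $f(u)=\alpha u+\beta$ into the channel model \eqref{eq:gaussian-mac}: since $X_i=\alpha U_i+\beta$, the receiver sees $Y=\alpha\sum_{i=1}^n U_i+n\beta+Z$. Because the $U_i$ are i.i.d.\ $\mathcal{N}(\theta,\sigma^2 I_d)$ and $Z\sim\mathcal{N}(0,\sigma_\mathrm{n}^2 I_d)$ independently, $Y$ is Gaussian with mean $n\alpha\theta+n\beta$ and covariance $(n\alpha^2\sigma^2+\sigma_\mathrm{n}^2)I_d$. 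Thus, for fixed $(\alpha,\beta)$, estimating $\theta$ from $Y$ is exactly a Gaussian location problem with an affine reparametrization of the mean and isotropic noise of variance $\tau^2\triangleq n\alpha^2\sigma^2+\sigma_\mathrm{n}^2$ per coordinate.

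Next I would identify the optimal estimator for fixed $(\alpha,\beta)$. The natural candidate is the affine unbiased estimator $\hat\theta(Y)=(Y-n\beta)/(n\alpha)$, whose error decomposes into the sampling fluctuation $\frac1n\sum_i(U_i-\theta)$ and the rescaled channel noise $Z/(n\alpha)$; these are independent and zero-mean, so the risk is constant in $\theta$ and equals $\frac{d\sigma^2}{n}+\frac{d\sigma_\mathrm{n}^2}{n^2\alpha^2}$. To argue this estimator is minimax I would compute the Fisher information of $Y$ about $\theta$, namely $\frac{n^2\alpha^2}{\tau^2}I_d$, so that the Cram\'er--Rao bound matches the risk above; combined with the classical minimaxity of the sample-mean estimator for a Gaussian location family, this identifies the affine estimator as optimal and shows the risk depends on the encoder only through $\alpha^2$, decreasing in it.

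Finally I would optimize over the encoder subject to the power budget. Computing $\mathbb{E}_\theta\lVert f(U_i)\rVert_2^2=\lVert\alpha\theta+\beta\rVert_2^2+\alpha^2\sigma^2 d$ and enforcing \eqref{eq:power-constraint-in-scheme} for every $\theta$ with $\lVert\theta\rVert_2\le B\sqrt d$, the binding case is the worst-case $\theta$ aligned with $\beta$, giving $\frac{(\lvert\alpha\rvert B\sqrt d+\lVert\beta\rVert_2)^2}{d}+\alpha^2\sigma^2\le P$. Since the risk is independent of $\beta$ while any $\beta\neq 0$ only shrinks the feasible range of $\alpha^2$, the optimum takes $\beta=0$, which reduces the constraint to $\alpha^2(B^2+\sigma^2)\le P$; as the risk decreases in $\alpha^2$, equality holds and $\alpha^2=P/(B^2+\sigma^2)$. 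Substituting yields the stated encoder/estimator pair and, after simplification, the minimax risk \eqref{eq:gaussian-location-minimax-scheme-risk}. I expect the main obstacle to be the estimator-optimality step: the Cram\'er--Rao argument cleanly handles unbiased estimators, but ruling out a lower worst-case risk from biased (shrinkage) estimators on the bounded ball $\Theta$ is the delicate part, and I would need the classical least-favorable-prior characterization of Gaussian-location minimaxity (or an appropriate restriction on the admissible estimators) to close it rigorously.
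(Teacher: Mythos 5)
Your proposal follows essentially the same route as the paper: fix the affine encoder, reduce to a Gaussian location problem, show the unbiased affine estimator $\hat\theta(Y)=(Y-n\beta)/(n\alpha)$ is minimax with constant risk $\frac{d}{n}(\sigma^2+\frac{\sigma_\mathrm{n}^2}{n\alpha^2})$, and then maximize $\alpha$ under the worst-case power constraint, which forces $\beta=0$ and $\alpha^2=P/(B^2+\sigma^2)$. The ``delicate'' estimator-optimality step you flag is closed in the paper exactly the way you anticipate: by computing the Bayes estimator under a $\mathcal{N}(\mu,b^2 I_d)$ prior and letting $b^2\to\infty$, obtaining a least-favorable sequence of priors whose Bayes risks increase to the constant risk of the unbiased estimator.
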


By using a repetition code, Theorem~\ref{thm:gaussian-location-minimax-scheme} can be extended to cases where $s > d$.

\begin{corollary}
    \label{cor:gaussian-location-repetition-scheme}
    In the Gaussian location model, if $s \ge d$, there exists a scheme $(f_\mathrm{R}, \hat\theta_\mathrm{R})$ achieving the worst-case risk
    \begin{equation}
        \label{eq:gaussian-location-repetition-scheme-risk}
        \sup_\theta R(\theta; f_\mathrm{R}, \hat\theta_\mathrm{R}) = \frac{d \sigma^2}{n} \left[1 + \frac{\sigma_\mathrm{n}^2}{\lfloor s/d \rfloor n P} \left(1 + \frac{B^2}{\sigma^2}\right)\right].
    \end{equation}
    This scheme involves repeating the encoding function \eqref{eq:gaussian-location-minimax-encoding-function} $\lfloor s/d \rfloor$ times, leaving the remaining $s - d \lfloor s/d \rfloor$ channel uses unused, and averaging the corresponding repeated estimates.
\end{corollary}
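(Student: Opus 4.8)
The plan is to reduce the $s \ge d$ case to the $s = d$ case already solved in Theorem~\ref{thm:gaussian-location-minimax-scheme} by using each block of $d$ channel uses to run an independent copy of the minimax scheme, and then averaging the resulting estimates.

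\begin{proof}[Proof sketch]
    First I would partition the available $s$ channel uses into $k \triangleq \lfloor s/d \rfloor$ consecutive blocks of $d$ channel uses each, discarding the leftover $s - dk$ uses. Within each block $\ell \in \{1, \dots, k\}$, every sender transmits the same encoding $f_\mathrm{M}(U_i)$ from \eqref{eq:gaussian-location-minimax-encoding-function}, so that $U_i$ is sent $k$ times; this defines $f_\mathrm{R}$. Crucially, each sender reuses its \emph{single} sample $U_i$ across all blocks rather than drawing fresh samples, so $f_\mathrm{R}$ remains a valid encoding function of $U_i$ alone. The receiver forms the per-block estimate $\hat\theta^{(\ell)}$ exactly as in Theorem~\ref{thm:gaussian-location-minimax-scheme}, and the scheme's estimator is the average $\hat\theta_\mathrm{R}(Y) = \frac1k \sum_{\ell=1}^k \hat\theta^{(\ell)}$.

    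The next step is the power check. Because each block transmits exactly the symbols of the single-block minimax scheme, which satisfies \eqref{eq:power-constraint-in-scheme} with equality using power $P$ over $d$ uses, repeating it over $k$ identical blocks keeps the average power per sender at $P$; the averaging in \eqref{eq:power-constraint} over all $dk$ used channel uses (and the fully-powered-or-idle leftover uses) does not exceed $P$, so $f_\mathrm{R}$ is feasible.

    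The risk computation is where the key structural observation enters. Writing $\hat\theta^{(\ell)} = \theta + W^{(\ell)}$, each per-block estimate decomposes into a common sample-averaging error plus an independent channel-noise error. Specifically, the term $\frac1n \sum_i (U_i - \theta)$ is \emph{shared} across all blocks (since the samples are reused), while the additive channel-noise contribution $\frac1n \sqrt{(B^2+\sigma^2)/P}\, Z^{(\ell)}$ is independent across blocks because the noise vectors $Z^{(\ell)}$ are independent. Averaging over the $k$ blocks therefore leaves the sample-averaging error untouched but reduces the channel-noise variance by a factor of $k$. Concretely, the first bracketed term in \eqref{eq:gaussian-location-minimax-scheme-risk}, namely $d\sigma^2/n$, arises from the shared sampling error and is unchanged, whereas the second term, carrying the factor $\sigma_\mathrm{n}^2/(nP)$, is divided by $k = \lfloor s/d \rfloor$, yielding precisely \eqref{eq:gaussian-location-repetition-scheme-risk}. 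Since both error components are independent of $\theta$, the worst-case risk equals the risk for every $\theta$, and no separate supremum argument is needed.

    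The main obstacle to watch is the bookkeeping in the risk decomposition: one must be careful to track which randomness is common across blocks (the samples $U_i$, hence the statistical estimation error) versus which is independent (the channel noise $Z^{(\ell)}$), since it is exactly this asymmetry that causes only the second term to shrink with $k$. A naive argument treating the blocks as fully independent would incorrectly divide the entire risk by $k$, so the correctness of \eqref{eq:gaussian-location-repetition-scheme-risk} hinges on isolating the shared sampling error.
\end{proof}
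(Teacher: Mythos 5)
Your proposal is correct and matches the paper's approach: the paper proves this by applying a general repetition lemma (Lemma~\ref{lem:repetition-schemes-for-affine-estimators}) for affine estimators, whose proof rests on exactly the decomposition you identify — the sample-dependent variance $\sum_i \mathbf{var}(AX_i)$ is shared across repetitions while only the channel-noise term $\mathbf{var}(AZ)$ is divided by $m = \lfloor s/d \rfloor$. Your explicit power check is a small addition the paper leaves implicit, but the substance of the argument is the same.
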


The proofs of Theorem~\ref{thm:gaussian-location-minimax-scheme} and Corollary~\ref{cor:gaussian-location-repetition-scheme} are in Section~\ref{sec:proof-achievability-gaussian-location}.

Where $s/d$ is not an integer, the unused channel uses could be filled with another partial repetition, giving a slight improvement on \eqref{eq:gaussian-location-repetition-scheme-risk} but a more unwieldy expression.

For the product Bernoulli parameter model, we provide the minimax scheme among those using affine estimators.

\begin{theorem}
    \label{thm:product-bernoulli-affine-estimator-minimax-risk}
    In the product Bernoulli parameter model, consider the class of all estimation schemes for $d$ channel uses ($s = d$), and using affine estimators. The minimax scheme in this class is the one using the encoding function defined per element
    \begin{equation}
        \label{eq:product-bernoulli-encoding-function}
        [f_\mathrm{M}(u)]_t = \begin{cases}
            -\sqrt{P}, &\text{ if }[u]_t = 0\\
            \sqrt{P},  &\text{ if }[u]_t = 1,
        \end{cases}
    \end{equation}
    where $[\cdot]_t$ is the $t$th element of its (vector) argument, and the estimator function $\hat\theta_{\mathrm{M}}(Y) = \alpha_\mathrm{M} Y + \beta_\mathrm{M} \mathbf{1}$, where $\beta_\mathrm{M} = \frac12$ and
    \begin{equation}
        \alpha_\mathrm{M} = \begin{cases}
            \dfrac{1}{2 \sqrt{n P} (\sqrt n + 1)},
                &\text{ if }\sigma_\mathrm{n}^2 \le n^{3/2} P, \\[14pt]
            \dfrac{n\sqrt{P}}{2(\sigma_\mathrm{n}^2 + n^2 P)},
                &\text{ if }\sigma_\mathrm{n}^2 \ge n^{3/2} P.
        \end{cases}
        \label{eq:bernoulli-affine-estimator-minimax-alpha}
    \end{equation}
    The minimax risk given by this choice of $f_\mathrm{M}$ and $(\alpha_\mathrm{M}, \beta_\mathrm{M})$ is
    \begin{align*}
        &\sup_\theta R(\theta; f_\mathrm{M}, \hat\theta_{\mathrm{M}}) = \\ &\qquad\qquad \begin{cases}
            \dfrac{d}{4 (\sqrt n + 1)^2}  \left(1 + \dfrac{\sigma_\mathrm{n}^2}{nP}\right), &\text{ if }\sigma_\mathrm{n}^2 \le n^{3/2} P, \\[14pt]
            \dfrac{d}{4} \cdot \dfrac{1}{1 + n \cdot \frac{n P}{\sigma_\mathrm{n}^2}}, &\text{ if }\sigma_\mathrm{n}^2 \ge n^{3/2} P.
        \end{cases}
        \numberthis
        \label{eq:product-bernoulli-affine-estimator-minimax-risk}
    \end{align*}
\end{theorem}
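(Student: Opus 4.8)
The plan is to reduce the problem to a one-dimensional minimax calculation per coordinate and then solve it in closed form. First I would verify achievability: for the antipodal encoding \eqref{eq:product-bernoulli-encoding-function} every codeword satisfies $\lVert f_\mathrm{M}(u)\rVert_2^2 = dP$ deterministically, so the power constraint \eqref{eq:power-constraint-in-scheme} holds with equality for every $\theta$. Writing $[f_\mathrm{M}(u)]_t = \sqrt P(2[u]_t - 1)$ and $S_t = \sum_{i=1}^n [U_i]_t \sim \mathrm{Binomial}(n,\theta_t)$, the received symbol is $Y_t = 2\sqrt P\,S_t - n\sqrt P + Z_t$, with mean $n\sqrt P(2\theta_t - 1)$ and variance $4Pn\theta_t(1-\theta_t) + \sigma_\mathrm{n}^2$. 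Since the estimator is affine and the coordinates are independent, the risk splits as a sum of per-coordinate risks, each equal to the bias--variance decomposition $\mathbb E_\theta[(\alpha Y_t + \beta - \theta_t)^2] = \alpha^2\mathrm{Var}_\theta[Y_t] + (\alpha\,\mathbb E_\theta[Y_t] + \beta - \theta_t)^2$.

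The next step is to pass to the centered parameter $\phi = \theta_t - \tfrac12 \in [-\tfrac12,\tfrac12]$ and the rescaled gain $\gamma = \alpha n\sqrt P$. Using $\theta_t(1-\theta_t) = \tfrac14 - \phi^2$ collapses the per-coordinate risk into the form $A + B\phi + C\phi^2$ with $A = \tfrac{\gamma^2}{n} + \tfrac{\gamma^2\sigma_\mathrm{n}^2}{n^2 P} + b_0^2$, $B = 2(2\gamma-1)b_0$, and $C = (2\gamma-1)^2 - \tfrac{4\gamma^2}{n}$, where $b_0 = \beta - \tfrac12$. I would then show $\beta = \tfrac12$ is optimal for every $\gamma$: because $\max_\phi$ of this quadratic dominates both $g(0) = A$ and $\tfrac12(g(\tfrac12) + g(-\tfrac12)) = A + \tfrac{C}{4}$ (in which the linear term cancels), the worst-case risk is at least $A_0 + b_0^2 + \max(0, \tfrac{C}{4})$ with $A_0 = A - b_0^2$, which is minimized over $\beta$ exactly at $b_0 = 0$. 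With $\beta = \tfrac12$ the risk is even in $\phi$, so the worst case sits at $\phi = 0$ (i.e. $\theta_t = \tfrac12$) when $C \le 0$, and at $\phi = \pm\tfrac12$ (i.e. $\theta_t \in \{0,1\}$) when $C > 0$.

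It then remains to minimize $h(\gamma) = A_0 + \max(0,\tfrac{C}{4})$ over $\gamma$. The set $\{C \le 0\}$ is an interval $[\gamma_0,\gamma_1]$ with left endpoint $\gamma_0 = \tfrac{\sqrt n}{2(\sqrt n + 1)}$; on it $h = \tfrac{\gamma^2}{n}(1 + \tfrac{\sigma_\mathrm{n}^2}{nP})$ is increasing, while off it $h = \tfrac{\gamma^2\sigma_\mathrm{n}^2}{n^2 P} + \tfrac{(2\gamma-1)^2}{4}$ is a convex parabola with vertex at $\gamma^\star = \tfrac{n^2 P}{2(n^2 P + \sigma_\mathrm{n}^2)}$. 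Comparing $\gamma^\star$ with $\gamma_0$ produces exactly the threshold $\sigma_\mathrm{n}^2 = n^{3/2}P$: when $\sigma_\mathrm{n}^2 \le n^{3/2}P$ the vertex lies inside $[\gamma_0,\gamma_1]$, so $h$ is minimized at the boundary $\gamma_0$, which converts to the first branch of \eqref{eq:bernoulli-affine-estimator-minimax-alpha}; when $\sigma_\mathrm{n}^2 \ge n^{3/2}P$ the vertex $\gamma^\star$ is admissible and yields the second branch. Substituting each optimal $\gamma$ into $h$ and multiplying by $d$ gives \eqref{eq:product-bernoulli-affine-estimator-minimax-risk}, and I would check continuity at the threshold as a sanity test.

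Finally I must argue no other encoding beats the antipodal one. Repeating the moment computation for an arbitrary common coordinate-wise map $g(0) = a$, $g(1) = c$ shows that, once the effective gain $\eta = \alpha n(c-a)$ is fixed, the coefficient $C$ is unchanged while $A_0$ carries the term $\alpha^2\sigma_\mathrm{n}^2 = \eta^2\sigma_\mathrm{n}^2/(n^2(c-a)^2)$, which is minimized by maximizing the separation $(c-a)^2$ subject to $\max(a^2,c^2) \le P$; this forces $a = -\sqrt P$, $c = \sqrt P$, and hence the midpoint $\mu = \tfrac{a+c}{2} = 0$ recovers $\beta_\mathrm{M} = \tfrac12$. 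The genuinely delicate point, and the one I expect to be the main obstacle, is ruling out coordinate-mixing or otherwise non-antipodal encodings in full generality: since an affine estimator sees each $Y_t$ only through its first two moments, I would reduce an arbitrary scheme to a coordinate-wise one by a symmetrization over coordinate permutations together with a convexity argument on the achievable (mean, variance) pairs, and only then invoke the separation bound above to conclude minimaxity.
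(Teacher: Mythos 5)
Your proposal is correct and arrives at the same scheme and risk, but the core scalar optimization is done by a genuinely different route. The paper guesses the answer and verifies it: Proposition~\ref{prop:bernoulli-minimax-affine-estimator} checks by a three-case analysis (using that the risk is constant in $\theta$ at $\alpha_\mathrm{lo}$, convex in $\theta$ at $\alpha_\mathrm{hi}$, and convex in $\alpha$ at $\theta=0$) that every other $(\alpha,\beta)$ is beaten at some $\theta$, and Lemma~\ref{lem:bernoulli-parameter-model-offset-does-nothing} reduces general two-point encoders to antipodal ones by a shift the receiver can undo, followed by monotonicity in $C^2$. You instead derive the minimax pair directly: writing the per-coordinate risk as $A+B\phi+C\phi^2$ in $\phi=\theta_t-\tfrac12$ and lower-bounding $\sup_\phi$ by the two evaluations $g(0)$ and $\tfrac12\bigl(g(\tfrac12)+g(-\tfrac12)\bigr)$ yields the closed form $A_0+b_0^2+\max(0,C/4)$, tight at $b_0=0$, so everything collapses to a one-variable minimization in $\gamma$, and the threshold $\sigma_\mathrm{n}^2=n^{3/2}P$ emerges from comparing the parabola vertex $\gamma^\star$ with $\gamma_0$ rather than being posited in advance. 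This is arguably more transparent about where the two regimes come from, at the cost of some algebra; your separation argument for the encoder is equivalent to the paper's shift lemma. The point you flag as delicate---ruling out coordinate-mixing encoders and uneven power allocation across the $d$ channel uses---is real and your sketch does not settle it, but the paper's own proof of the theorem dispatches it in one sentence by appealing to independence across coordinates and even power division, so you are not missing an argument the paper actually supplies. Two trivial omissions that do not affect the conclusion: you should note that $\gamma<0$ forces risk at least $\tfrac14$ and is never optimal, and that for $n=1$ the set $\{C\le 0\}$ is a half-line rather than a bounded interval.
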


We can also similarly extend this using a repetition code.

\begin{corollary}
    \label{cor:product-bernoulli-repetition-scheme}
    In the product Bernoulli parameter model, if $s \ge d$, there exists a scheme $(f_\mathrm{R}, \hat\theta_\mathrm{R})$ achieving the risk
    \begin{align*}
        &\sup_\theta R(\theta; f_\mathrm{R}, \hat\theta_{\mathrm{R}}) = \\ &\qquad \begin{cases}
            \dfrac{d}{4 (\sqrt n + 1)^2}  \left(1 + \dfrac{\sigma_\mathrm{n}^2}{\lfloor s/d \rfloor nP}\right), &\text{ if }\sigma_\mathrm{n}^2 \le n^{3/2} P, \\[14pt]
            \dfrac{d}{4} \cdot \dfrac{1}{1 + n \cdot \frac{\lfloor s/d \rfloor  n P}{\sigma_\mathrm{n}^2}}, &\text{ if }\sigma_\mathrm{n}^2 \ge n^{3/2} P.
        \end{cases}
        \numberthis
        \label{eq:product-bernoulli-repetition-affine-estimator-minimax-risk}
    \end{align*}
    This scheme involves repeating the encoding function \eqref{eq:product-bernoulli-encoding-function} $\lfloor s/d \rfloor$ times, leaving the remaining $s - d \lfloor s/d \rfloor$ channel uses unused, and averaging the corresponding repeated estimates.
\end{corollary}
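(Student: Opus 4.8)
The plan is to reduce the corollary directly to Theorem~\ref{thm:product-bernoulli-affine-estimator-minimax-risk}, exploiting the fact that repeating the single-block encoding has the sole effect of averaging down the \emph{channel} noise while leaving the sampling randomness untouched. Concretely, I would partition the $s \ge d$ channel uses into $r \triangleq \lfloor s/d\rfloor$ consecutive blocks of length $d$ (transmitting $0$ on the leftover $s - rd$ uses). In each block $k = 1, \dots, r$, every node $i$ sends the same codeword $f_\mathrm{M}(U_i)$ from \eqref{eq:product-bernoulli-encoding-function}, so the receiver observes, coordinatewise,
\[
    Y^{(k)}_t = \sum_{i=1}^n [f_\mathrm{M}(U_i)]_t + Z^{(k)}_t, \qquad k = 1, \dots, r,
\]
where the $Z^{(k)}_t \sim \mathcal N(0,\sigma_\mathrm{n}^2)$ are independent across blocks. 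Averaging yields $\bar Y_t \triangleq \frac1r \sum_{k=1}^r Y^{(k)}_t = \sum_i [f_\mathrm{M}(U_i)]_t + \bar Z_t$ with $\bar Z_t \sim \mathcal N(0, \sigma_\mathrm{n}^2/r)$. The essential observation is that the signal term $\sum_i [f_\mathrm{M}(U_i)]_t$ is \emph{identical} in every block---repetition does not generate fresh samples---so only the channel noise shrinks, from variance $\sigma_\mathrm{n}^2$ to $\sigma_\mathrm{n}^2/r$.

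Next I would note that, since the per-block estimator in Theorem~\ref{thm:product-bernoulli-affine-estimator-minimax-risk} is affine, averaging the $r$ block estimates coincides with applying a single affine estimator to $\bar Y$: if $\hat\theta^{(k)} = \alpha Y^{(k)} + \beta\mathbf 1$, then $\frac1r \sum_k \hat\theta^{(k)} = \alpha \bar Y + \beta\mathbf 1$. Hence the repetition-plus-averaging scheme is exactly an instance of the single-block affine-estimator problem of Theorem~\ref{thm:product-bernoulli-affine-estimator-minimax-risk}, but with channel-noise variance $\sigma_\mathrm{n}^2/r$ in place of $\sigma_\mathrm{n}^2$. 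Taking $(\alpha,\beta)$ to be the minimax affine estimator \eqref{eq:bernoulli-affine-estimator-minimax-alpha} evaluated at this effective noise level, I can read the worst-case risk directly off \eqref{eq:product-bernoulli-affine-estimator-minimax-risk} under the substitution $\sigma_\mathrm{n}^2 \mapsto \sigma_\mathrm{n}^2/r$ applied throughout---including to the case boundary, so that the two branches correspond to $\sigma_\mathrm{n}^2/r \le n^{3/2}P$ and $\sigma_\mathrm{n}^2/r \ge n^{3/2}P$. This reproduces the two expressions in \eqref{eq:product-bernoulli-repetition-affine-estimator-minimax-risk}.

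Before invoking Theorem~\ref{thm:product-bernoulli-affine-estimator-minimax-risk} I would verify feasibility: the $\pm\sqrt P$ encoding expends exactly power $P$ on each of the $rd$ active uses and $0$ elsewhere, so $\frac1s \mathbb E_\theta\lVert f_\mathrm{R}(U_i)\rVert_2^2 = \tfrac{rd}{s}P \le P$ for every $\theta$, meeting \eqref{eq:power-constraint-in-scheme}. I do not expect a genuine obstacle, as the whole argument is a reduction; the only point demanding care is the conceptual one already flagged---that repetition averages the additive channel noise but \emph{not} the intrinsic variance of the transmitted samples---so the effective model is precisely that of Theorem~\ref{thm:product-bernoulli-affine-estimator-minimax-risk} at noise $\sigma_\mathrm{n}^2/r$, and in particular the regime threshold rescales to $\lfloor s/d\rfloor\,n^{3/2}P$ rather than remaining at $n^{3/2}P$.
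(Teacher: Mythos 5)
Your argument is correct and is essentially the paper's own proof: the paper isolates your key observation as Lemma~\ref{lem:repetition-schemes-for-affine-estimators} (repeating an affine-estimator scheme $m$ times and averaging the per-block estimates reproduces the original risk with $\sigma_\mathrm{n}^2/m$ in place of $\sigma_\mathrm{n}^2$, because only the channel noise---not the sample randomness---is averaged down) and then applies it to Theorem~\ref{thm:product-bernoulli-affine-estimator-minimax-risk} with $m=\lfloor s/d\rfloor$, which is exactly the reduction you carry out, including the remark that averaging affine per-block estimates equals applying one affine estimator to the averaged output. The only point of divergence is the regime boundary: you consistently rescale it to $\sigma_\mathrm{n}^2 \le \lfloor s/d\rfloor\, n^{3/2}P$, whereas the printed corollary keeps $\sigma_\mathrm{n}^2 \le n^{3/2}P$; your version is what the substitution of $\sigma_\mathrm{n}^2 \mapsto \sigma_\mathrm{n}^2/\lfloor s/d\rfloor$ actually yields, and in the overlap regime $n^{3/2}P \le \sigma_\mathrm{n}^2 \le \lfloor s/d\rfloor\, n^{3/2}P$ your scheme attains the (smaller) first-branch expression, so the achievability claim is if anything strengthened.
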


The proofs of Theorem~\ref{thm:product-bernoulli-affine-estimator-minimax-risk} and Corollary~\ref{cor:product-bernoulli-repetition-scheme} are in Section~\ref{sec:proof-achievability-product-bernoulli}.

\section{Comparison to Digital Lower Bounds}
\label{sec:discussion-and-comparison}
In the previous section, we characterized the performance of analog estimation schemes for the Gaussian and Bernoulli models. In this section, we compare their performance to digital approaches that have been studied in the recent literature, and show that analog schemes can lead to drastically smaller estimation error for the same amount of physical resources, \textit{i.e.}\ transmission power and number of channel uses.

In particular, recent work in machine learning \cite{duchi,garg,braverman,diakonikolas,archayaetal,barnes-lowerbounds} has studied the impact of communication constraints on distributed parameter estimation. These works abstract out the physical layer, simply assuming a constraint on the number of bits available to represent each sample. This implicitly corresponds to assuming that communication is done in a digital fashion, with channel coding used to transmit the resultant bits without any errors. For example, in \cite{barnes-lowerbounds}, the authors develop information-theoretic lower bounds on the minimax squared error risk over a parameter space $\Theta \subset \mathbb{R}^d$,
\begin{equation*}
    \sup_{\theta \in \Theta,\, \mathbf{f} \in \mathcal{F}_k^\mathrm{D}} R(\theta; \mathbf{f}, \hat\theta) = \sup_{\theta \in \Theta,\, \mathbf{f} \in \mathcal{F}_k^\mathrm{D}} \mathbb{E}_\theta \lVert\hat\theta(Y) - \theta\rVert_2^2,
\end{equation*}
where $\mathcal{F}_k^\mathrm{D}$ now is defined as the set of all possible encoding schemes $\mathbf{f} \triangleq (f_1, \dots, f_n)$, where $f_i(u) \in \{1, 2, \dots, 2^k\}$ for all $i = 1, \dots n$, \textit{i.e.}\ each sample $U_i$ is quantized to $k$ bits, which are then noiselessly communicated to the receiver. Note that these information-theoretic results lower bound the minimax risk achieved by any $k$-bit digital estimation scheme. In this section, we compare our results to these lower bounds, as applied to the Gaussian MAC we study in this paper.

We assume that senders can transmit errorlessly at the Shannon capacity of the channel.
The capacity region of a Gaussian multiple-access channel with $n$ users, power $P$ and channel noise $\sigma_\mathrm{n}^2$ is given by the region of all $(R_1, \dots, R_n)$ satisfying \cite{cover-infotheory}
\begin{equation}
    \sum_{i \in S} R_i < \frac12 \log_2 \left( 1 + \frac{|S| \bar P}{\sigma_\mathrm{n}^2} \right), \qquad \forall S \subseteq \{1, \dots, n\}.
    \label{eq:capacity-of-awgn-mac-succinct}
\end{equation}

We allocate rates equally among all the senders, in which case the inequality in which $S$ comprises all the senders dominates. If the MAC channel is utilized $s$ times, we assume that each sender is able to noiselessly communicate
\begin{equation}
    k = \sup_{(R_1, \dots, R_n)} \frac{s}{n} \sum_{i=1}^n R_i = \frac{s}{2n} \log_2 \left( 1 + \frac{n P}{\sigma_\mathrm{n}^2} \right)\qquad\text{bits}
    \label{eq:capacity-of-awgn-mac-applied}
\end{equation}
to the receiver. Note that at finite block lengths, the senders cannot communicate to the receiver at the Shannon capacity and that this optimistic assumption benefits the performance of the digital schemes.

We can then substitute \eqref{eq:capacity-of-awgn-mac-applied} into the aforementioned lower bound on minimax risk in \cite{barnes-lowerbounds}. However, that result (Corollary 5 therein) has an unspecified constant, inhibiting direct comparisons to our result from Corollary~\ref{cor:gaussian-location-repetition-scheme} above. Therefore, we instead take advantage of more recent work \cite{barnes-fisher-arxiv} to rederive the bound without the unspecified constant, and combine this with \eqref{eq:capacity-of-awgn-mac-applied} to arrive at the following digital lower bound for the Gaussian location model.

\begin{proposition}
    \label{prop:digital-lower-bound-gaussian-location-model}
    In the Gaussian location model, consider all schemes in which senders send bits to the receiver at the Shannon capacity for $s$ channel uses. For $\frac{s}{n} \log_2 \left( 1 + \frac{n P}{\sigma_\mathrm{n}^2} \right) < d$, the risk associated with any such scheme is at least
    \begin{equation}
        \sup_{\lVert\theta\rVert_2 \le B\sqrt d} \mathbb{E}_\theta \lVert\hat\theta - \theta\rVert_2^2 \ge
        \frac{d\sigma^2}{\frac{s}{d} \log_2 \left(1 + \frac{n P}{\sigma_\mathrm{n}^2}\right) + \frac{\pi^2 \sigma^2}{B^2}}.
        \label{eq:gaussian-location-barnes-lowerbound-applied}
    \end{equation}
\end{proposition}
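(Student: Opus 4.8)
The plan is to lower-bound the minimax risk by a Bayesian risk and then apply the multivariate van Trees inequality, feeding it a Fisher-information budget obtained from the capacity constraint. First I would observe that in any digital scheme each sender $i$ conveys a message $B_i \in \{1,\dots,2^k\}$ losslessly to the receiver, where $k = \frac{s}{2n}\log_2(1+nP/\sigma_\mathrm{n}^2)$ by \eqref{eq:capacity-of-awgn-mac-applied}, so that $I(U_i;B_i) \le H(B_i) \le k$ bits. Since the samples $U_i$ are i.i.d.\ given $\theta$ and each $B_i$ depends only on $U_i$, the messages $(B_1,\dots,B_n)$ are conditionally independent given $\theta$, and the receiver's observation $Y$ is a function of this tuple.

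Next I would select a prior and invoke van Trees. Take $\lambda$ to be the product over the $d$ coordinates of the cosine-squared density $\lambda_0(t) = \frac1B\cos^2\!\big(\tfrac{\pi t}{2B}\big)$ on $[-B,B]$. Since $[-B,B]^d \subseteq \{\lVert\theta\rVert_2 \le B\sqrt d\} = \Theta$, this prior is admissible, and it vanishes smoothly at the boundary so the integration-by-parts terms in van Trees vanish. A direct computation gives prior Fisher information $\pi^2/B^2$ per coordinate, hence $\mathcal{J}_{\mathrm{prior}} = d\pi^2/B^2$ in total. The multivariate van Trees inequality then yields
\[
    \sup_\theta \mathbb{E}_\theta\lVert\hat\theta-\theta\rVert_2^2 \ge \mathbb{E}_\lambda \mathbb{E}_\theta\lVert\hat\theta-\theta\rVert_2^2 \ge \frac{d^2}{\mathcal{J}_{\mathrm{data}} + \mathcal{J}_{\mathrm{prior}}},
\]
where $\mathcal{J}_{\mathrm{data}} = \int \lambda(\theta)\,\mathrm{tr}\,I_Y(\theta)\,d\theta$ is the prior-averaged trace Fisher information of the observation.

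It then remains to bound $\mathcal{J}_{\mathrm{data}}$. By additivity of Fisher information over the conditionally independent messages, $\mathrm{tr}\,I_Y(\theta) = \sum_{i=1}^n \mathrm{tr}\,I_{B_i}(\theta)$. For the Gaussian location model the score $\frac{U-\theta}{\sigma^2}$ has independent $\mathcal{N}(0,1/\sigma^2)$ coordinates and is thus sub-Gaussian with proxy $1/\sigma^2$, so I would apply the Fisher-information-from-mutual-information bound of \cite{barnes-fisher-arxiv} to each message to obtain $\mathrm{tr}\,I_{B_i}(\theta) \le \frac{2}{\sigma^2} I(U_i;B_i) \le \frac{2k}{\sigma^2}$, with mutual information measured in bits. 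Summing over $i$ and averaging over $\lambda$ gives $\mathcal{J}_{\mathrm{data}} \le 2nk/\sigma^2$. Substituting this together with $\mathcal{J}_{\mathrm{prior}}$ and using $2nk = s\log_2(1+nP/\sigma_\mathrm{n}^2)$ yields
\[
    \sup_\theta \mathbb{E}_\theta\lVert\hat\theta-\theta\rVert_2^2 \ge \frac{d^2\sigma^2}{s\log_2(1+nP/\sigma_\mathrm{n}^2) + d\pi^2\sigma^2/B^2},
\]
which is exactly \eqref{eq:gaussian-location-barnes-lowerbound-applied} after dividing numerator and denominator by $d$.

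The hypothesis $\frac{s}{n}\log_2(1+nP/\sigma_\mathrm{n}^2) < d$ is equivalent to $2k < d$, which is precisely the regime in which the linear-in-mutual-information bound $\frac{2}{\sigma^2}I(U_i;B_i)$ does not exceed the trivial uncompressed value $\mathrm{tr}\,I_{U_i}(\theta) = d/\sigma^2$, so it is the active and legitimate bound. I expect the main obstacle to be the careful application of \cite{barnes-fisher-arxiv}: verifying the sub-Gaussian score hypothesis with the correct proxy $1/\sigma^2$ and tracking the exact constant and the bits-versus-nats convention, so that the numerical factor of $2$ (and hence the clean denominator) comes out correctly. The van Trees step and the Fisher-information additivity are routine once the boundary-vanishing property of the cosine prior is noted.
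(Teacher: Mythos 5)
Your proposal is correct and follows essentially the same route as the paper: the paper's proof is just a pointer to the chain of results in \cite{barnes-lowerbounds} (the van Trees bound with the cosine-squared prior, the per-message Fisher-information-from-mutual-information bound of \cite{barnes-fisher-arxiv} with sub-Gaussian score parameter $1/\sigma$, additivity over the independent senders, and substitution of the MAC sum capacity), which is exactly the chain you reconstruct explicitly. The constants, the bits convention, and the role of the hypothesis $2k < d$ all come out as in the paper.
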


\begin{proof}
    Use Theorem~\ref{thm:fisher-information-under-shannon-constraint} (see Section~\ref{sec:proofs-of-analog-lower-bounds}) to rederive Theorem 2 of \cite{barnes-lowerbounds} without the unspecified constant. Then follow the modifications to Corollary 1 of \cite{barnes-lowerbounds}, then Corollary 5 of \cite{barnes-lowerbounds}. Finally, substitute \eqref{eq:capacity-of-awgn-mac-applied} into this modified result. Note that $[-B,B]^d \subset \Theta \triangleq \{\theta: \lVert\theta\rVert_2 \le B\sqrt d\}$, as required by their lower bound result.
\end{proof}

Compare this to \eqref{eq:gaussian-location-repetition-scheme-risk} from Corollary~\ref{cor:gaussian-location-repetition-scheme}.
Note that this proposition implies that as the number of nodes and therefore the number of samples $n$ increases, the risk of any digital scheme decreases  as $\Omega(d^2/s \log n)$, whereas the risk of the scheme from Corollary~\ref{cor:gaussian-location-repetition-scheme} scales with $O(d/n)$. This implies that, when $s \ge d$, the analog schemes can lead to an exponentially smaller estimation error, or equivalently require an exponentially smaller number of samples to achieve the same estimation accuracy, as compared to digital schemes employing the same physical resources.

On the other hand, we make a brief note on the case where $s < d$. Here, an analog scheme transmitting scaled versions of samples cannot easily communicate more coordinates than it has channel uses. A natural approach would be for each node to transmit only the (scaled) first $s$ elements of $U_i$. In this case, the worst-case risk would scale as $\Theta(d)$, independent of $s$ and $n$, which is the maximal risk achievable even in the absence of any samples. Thus, the digital scheme achieves risk better than $\Theta(d)$ whenever $s = \omega(d/\log(1+\frac{nP}{\sigma_\mathrm{n}^2}))$, which can be the case when the SNR or $n$ is large, while our analog scheme requires $s \ge d$ to be viable.

We also have a similar situation for the Bernoulli model, derived in the same way.

\begin{proposition}
    \label{prop:digital-lower-bound-product-bernoulli-model}
    In the product Bernoulli model, consider all schemes in which senders send bits to the receiver at the Shannon capacity for $s$ channel uses. For $\frac{s}{n} \log_2 \left( 1 + \frac{n P}{\sigma_\mathrm{n}^2} \right) < d$, the risk associated with any such scheme is at least
    \begin{equation}
        \sup_{\theta \in [0,1]^d} \mathbb{E}_\theta \lVert\hat\theta - \theta\rVert_2^2 \ge
        \frac{d}{\frac{s}{d} \frac{1}{(\frac12 - 2 \varepsilon^2)^2} \log_2 \left( 1 + \frac{n P}{\sigma_\mathrm{n}^2} \right) + \frac{\pi^2}{\varepsilon^2}}
        \label{eq:product-bernoulli-barnes-lowerbound-applied}
    \end{equation}
\end{proposition}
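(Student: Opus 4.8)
My plan is to mirror the proof of Proposition~\ref{prop:digital-lower-bound-gaussian-location-model} line by line, replacing the Gaussian-specific quantities with their product Bernoulli analogues. As there, the backbone is the van Trees (Bayesian Cram\'er--Rao) machinery of \cite{barnes-lowerbounds}: I would place a product prior on $\Theta = [0,1]^d$, lower bound the Bayes risk coordinatewise by the reciprocal of the sum of the output Fisher information and the prior Fisher information, and then control the former via Theorem~\ref{thm:fisher-information-under-shannon-constraint} to rederive the analogues of Theorem 2, Corollary 1 and Corollary 5 of \cite{barnes-lowerbounds} without the unspecified constant. The two model-specific inputs I need are (i) the prior Fisher information and (ii) the sub-Gaussian parameter of the Bernoulli score that Theorem~\ref{thm:fisher-information-under-shannon-constraint} requires.

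For (i), I would take each coordinate's prior to be a cosine density supported on $[\tfrac12 - \varepsilon, \tfrac12 + \varepsilon]$, which is contained in $[0,1]$ as required; this yields a per-coordinate prior Fisher information of $\pi^2/\varepsilon^2$, exactly the second term in the denominator of \eqref{eq:product-bernoulli-barnes-lowerbound-applied}. For (ii), I would use that the Bernoulli score $s_\theta(u) = (u-\theta)/\bigl(\theta(1-\theta)\bigr)$ is a two-valued random variable, taking the value $-1/(1-\theta)$ when $u=0$ and $1/\theta$ when $u=1$, hence supported on an interval of width $1/\bigl(\theta(1-\theta)\bigr)$. By Hoeffding's lemma it is therefore sub-Gaussian with parameter $1/\bigl(4\theta^2(1-\theta)^2\bigr)$. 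Since $\theta(1-\theta) \ge \tfrac14 - \varepsilon^2$ throughout the support $[\tfrac12-\varepsilon,\tfrac12+\varepsilon]$, this parameter is at most $1/\bigl(2(\tfrac14-\varepsilon^2)\bigr)^2 = 1/(\tfrac12 - 2\varepsilon^2)^2$, which is precisely the coefficient multiplying the channel term in \eqref{eq:product-bernoulli-barnes-lowerbound-applied}.

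With these two quantities in hand, Theorem~\ref{thm:fisher-information-under-shannon-constraint} bounds the total output Fisher information by a universal constant times the product of this sub-Gaussian parameter and the mutual information $I(U;Y)$, which in the digital setting is at most the total number of transmitted bits. Substituting the Shannon-capacity bit budget from \eqref{eq:capacity-of-awgn-mac-applied}, namely $nk = \tfrac{s}{2}\log_2(1 + nP/\sigma_\mathrm{n}^2)$, and distributing the total Fisher information symmetrically across the $d$ coordinates produces the channel term $\tfrac{s}{d}\tfrac{1}{(\tfrac12-2\varepsilon^2)^2}\log_2(1+nP/\sigma_\mathrm{n}^2)$ after the same constant-tracking as in Proposition~\ref{prop:digital-lower-bound-gaussian-location-model}; adding the prior term $\pi^2/\varepsilon^2$ and summing the coordinatewise van Trees bounds over the $d$ coordinates yields \eqref{eq:product-bernoulli-barnes-lowerbound-applied}. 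The hypothesis $\tfrac{s}{n}\log_2(1+nP/\sigma_\mathrm{n}^2) < d$ is exactly what places us in the low-rate regime of \cite{barnes-lowerbounds} in which this argument is valid.

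I expect the main obstacle to be item (ii): pinning down the sub-Gaussian constant of the Bernoulli score and worst-casing it uniformly over the prior's support. Unlike the Gaussian location model, whose score is exactly Gaussian and hence trivially sub-Gaussian with parameter $1/\sigma^2$, the Bernoulli score is a two-point random variable whose sub-Gaussian parameter $1/\bigl(4\theta^2(1-\theta)^2\bigr)$ varies with $\theta$ and diverges as $\theta \to 0,1$. The role of $\varepsilon$ is precisely to keep the prior away from the endpoints so that this parameter stays bounded, at the cost of a weaker prior term $\pi^2/\varepsilon^2$; the appearance of $\varepsilon$ in \emph{both} terms of \eqref{eq:product-bernoulli-barnes-lowerbound-applied} reflects this tension, and the bound is stated as a free-parameter inequality to be optimized over $\varepsilon \in (0,\tfrac12)$. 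Verifying that the Hoeffding range bound, rather than a looser exponential-moment estimate, is what Theorem~\ref{thm:fisher-information-under-shannon-constraint} actually consumes is the delicate point; the remaining steps are the same bookkeeping as in Proposition~\ref{prop:digital-lower-bound-gaussian-location-model}.
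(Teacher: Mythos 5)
Your proposal is correct and follows essentially the same route as the paper, which simply invokes the chain Corollary~8 of \cite{barnes-lowerbounds} via Corollary~4 of \cite{barnes-lowerbounds} and Theorem~\ref{thm:fisher-information-under-shannon-constraint}, substituting the Shannon-capacity bit budget \eqref{eq:capacity-of-awgn-mac-applied}; your unpacking of the cosine prior (giving $\pi^2/\varepsilon^2$) and of the worst-cased Bernoulli score is exactly what that chain consumes. The only slip is notational: under the paper's convention the score's sub-Gaussian \emph{parameter} is $\rho = 1/\bigl(2\theta(1-\theta)\bigr) \le 1/(\tfrac12-2\varepsilon^2)$, and the quantity $1/\bigl(4\theta^2(1-\theta)^2\bigr)$ you call the parameter is actually $\rho^2$ --- which is indeed what multiplies the mutual information in Theorem~\ref{thm:fisher-information-under-shannon-constraint}, so your final coefficient in \eqref{eq:product-bernoulli-barnes-lowerbound-applied} comes out right.
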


\begin{proof}
    Apply a similarly modified version of Corollary~8 from \cite{barnes-lowerbounds} (via Corollary~4 of \cite{barnes-lowerbounds} and Theorem~2 of \cite{barnes-fisher-arxiv}), using \eqref{eq:capacity-of-awgn-mac-applied}.
\end{proof}

\begin{table*}
\centering
\caption{Comparison of results}
\label{tab:comparison-of-results}
    {\tabulinesep=1ex
    \begin{tabu}{r|c|c|c}
        & analog achievability & digital lower bound & analog lower bound \\
        \hline \hline
        Gaussian location &
        $\frac{d \sigma^2}{n} \left[1 + \frac{\sigma_\mathrm{n}^2}{\lfloor s/d \rfloor n P} \left(1 + \frac{B^2}{\sigma^2}\right)\right]$ &
        $\frac{d\sigma^2}{\frac{s}{d} \log_2 \left(1 + \frac{n P}{\sigma_\mathrm{n}^2}\right) + \frac{\sigma^2}{B^2} \pi^2}$ &
        $\frac{d\sigma^2}{n} \cdot \frac{1}{\frac{s}{d} \log_2\left(1 + \frac{n P}{\sigma_\mathrm{n}^2}\right) + \frac{\sigma^2}{B^2} \frac{\pi^2}{n}}$ \\
        \hline
        product Bernoulli &
        $\frac{d}{4 (\sqrt n + 1)^2}  \left(1 + \frac{\sigma_\mathrm{n}^2}{\lfloor s/d \rfloor nP}\right)$ &
        $\frac{d}{\frac{s}{d} \frac{1}{(\frac12 - 2 \varepsilon^2)^2} \log_2 \left( 1 + \frac{n P}{\sigma_\mathrm{n}^2} \right) + \frac{\pi^2}{\varepsilon^2}}$ &
        $\frac{d}{n} \cdot \frac{1}{\frac{s}{d} \frac{1}{(\frac12 - 2\varepsilon^2)^2} \log_2\left(1 + \frac{n P}{\sigma_\mathrm{n}^2}\right) + \frac{\pi^2}{n\varepsilon^2}}.$ \\
        \hline
        $s \ge d$ (both) &
        $O\!\left(\dfrac{d}{n}\right)$ &
        $\Omega\!\left(\dfrac{d^2}{s \log n}\right)$ &
        $\Omega\!\left(\dfrac{d^2}{s \cdot n \log n}\right)$ \\
        \hline
        $s \propto d$ (both) &
        $O\!\left(\dfrac{d}{n}\right)$ &
        $\Omega\!\left(\dfrac{d}{\log n}\right)$ &
        $\Omega\!\left(\dfrac{d}{n \log n}\right)$ \\
        \hline
    \end{tabu}
    }
\end{table*}

Note that analogously to the Gaussian case, this result implies that the risk of any digital scheme for the Bernoulli model scales as $\Omega(d^2/s \log n)$, while the risk of our analog scheme from Corollary~\ref{cor:product-bernoulli-repetition-scheme} decreases as $O(d/n)$. In the low SNR case, $\sigma_\mathrm{n}^2 \ge n^{3/2} P$, the analog scheme appears to achieve $O(d^2/sn^2)$ (compared to $\Omega(d/n)$), but since increasing $n$ also increases the SNR given by $nP/\sigma_\mathrm{n}^2$, this relationship will eventually give way to the high SNR regime, where $\sigma_\mathrm{n}^2 \le n^{3/2} P$. These results show that building on the inherent summation of transmitted signals in the Gaussian MAC to perform the averaging that classical statistical estimators would do can provide drastic gains in estimation performance. Similar gains have been observed in asymptotic lossy source coding for Gaussian sensor networks in \cite{gastpar-TIT-2008,gastpar-IPSN-2003}, where one is interested in communicating an i.i.d.\ Gaussian source over a MAC under mean-squared error distortion, as well as for distributed stochastic gradient descent in experimental comparisons in \cite{amiri-sgd-air}.

We summarize all these results, as well as the new lower bounds we present in Section~\ref{sec:main-results-analog-lower-bounds} below, in Table~\ref{tab:comparison-of-results}.

\section{Results for Analog Lower Bounds}
\label{sec:main-results-analog-lower-bounds}

So far in this paper, we have shown that the analog joint communication-estimation schemes in Section~\ref{sec:main-results-analog-achievability} can exponentially outperform lower bounds for digital schemes with a physical layer abstraction. Having shown this striking advantage, the next question this naturally raises is what lower bounds exist for analog schemes. That is: if these analog schemes can achieve what digital schemes could never hope for, how close are we to the fundamental limits for such analog approaches?

In this section, we present new lower bounds on worst-case squared error risk for any estimation scheme in our setting of interest where the parametric model has a sub-Gaussian score function. This work draws on the findings of \cite{barnes-fisher-arxiv}, for which we first introduce some relevant quantities. We then apply them to the same models for which we presented achievability results in Section~\ref{sec:main-results-analog-achievability}. We find that these lower bounds are within a logarithmic factor of the risk achieved by the schemes presented above.

\subsection{Preliminaries}
\label{sec:analog-lower-bounds-preliminaries}

If $U \sim p_\theta$, where $p_\theta$ is a member of a family of probability distributions parameterized by $\theta \in \Theta \subseteq \mathbb{R}^d$ and differentiable in $\theta$, the \emph{score function} is defined as the gradient of the log-likelihood function,
\begin{align*}
    S_\theta(u)
    &\triangleq \nabla_\theta \log p_\theta(u) \\
    &= \left( \frac{\partial}{\partial \theta_1} \log p_\theta(u), \dots, \frac{\partial}{\partial \theta_d} \log p_\theta(u) \right).
\end{align*}
Where we have many samples $U_1, \dots, U_n \sim p_\theta$, we may denote the score function of the finite sequence as
\begin{equation*}
    S_\theta(u_1, \dots, u_n) \triangleq \nabla_\theta \log p_\theta(u_1, \dots, u_n).
\end{equation*}
Note that both $S_\theta(u)$ and $S_\theta(u_1, \dots, u_n)$ have the same number of elements as $\theta$, independent of the size of the argument passed into $S_\theta(\cdot)$. It is a well-known property of the score function that $\mathbb{E}[S_\theta(U)] = 0$.

The \emph{Fisher information} is then defined as the $d \times d$ matrix
\begin{equation*}
    I_U(\theta) \triangleq \mathbb{E}[S_\theta(U) S_\theta(U)^\mathsf{T}]
\end{equation*}
which makes its trace equal to
\begin{equation*}
    \mathbf{tr}(I_U(\theta)) = \sum_{j=1}^d \mathbb{E}\!\left[\left(\frac{\partial}{\partial \theta_j} \log p_\theta(u)\right)^2\right].
\end{equation*}

We say that a zero-mean random variable $X$ is \emph{sub-Gaussian with parameter $\rho$} if
\begin{equation*}
    \mathbb{E}\left[\exp(\lambda X)\right] \le \exp\left( \frac{\lambda^2 \rho^2}{2} \right)
    \quad \text{for all }\lambda \in \mathbb{R}.
\end{equation*}

Recall that if a zero-mean random variable $X$ is bounded within $[a, b]$ with probability 1 then it is sub-Gaussian with parameter $(b-a)/2$. Also, if $X_1, \dots, X_n$ are independent and sub-Gaussian with parameters $\rho_1, \dots, \rho_n$, then their sum $X_1 + \cdots + X_n$ is sub-Gaussian with parameter $\sqrt{\rho_1^2 + \cdots + \rho_n^2}$.

Our lower bounds require the regularity conditions described in \cite{barnes-fisher-arxiv}, which we recite here:
\begin{enumerate}[label=(\roman*)]
    \item \label{itm:regularity-density-differentiable}
    $\sqrt{p_\theta(u_1, \dots, u_n)}$ is continuously differentiable with respect to each component $\theta_j$ at almost all $(u_1, \dots, u_n) \in \mathcal{U}^n$ (with respect to some measure dominating ${\{p_\theta: \theta \in \Theta\}}$).
    \item \label{itm:regularity-fisher-continuous}
    The Fisher information for each component $\theta_j$, $\mathbb{E}([\frac{\partial}{\partial \theta_j} \log p_\theta(U_1, \dots, U_n) ]^2)$, exists and is a continuous function of $\theta_j$.
    \item \label{itm:regularity-channel-squareintegrable}
    The conditional density $p(y|u_1, \dots, u_n)$ is square integrable in the sense that for almost all $y \in \mathbb{R}^s$ for each $\theta$, $\int p(y|u_1, \dots, u_n)^2 \,dp_\theta(u_1, \dots, u_n) < \infty$.
\end{enumerate}
It is easily verified that \ref{itm:regularity-density-differentiable} and \ref{itm:regularity-fisher-continuous} are satisfied in both the Gaussian location and product Bernoulli parameter models. As for \ref{itm:regularity-channel-squareintegrable}, this follows from the bounded conditional density $p(y|x_1, \dots, x_n)$ of the Gaussian MAC; details are in the relevant proofs.

\subsection{Main lower bound on worst-case risk}
\label{sec:main-analog-lower-bound}

We have now covered the background necessary to state our main lower bound, which is on the squared error risk for any estimation scheme in the setting described in Section~\ref{sec:problem-statement} when the parametric model has a sub-Gaussian score.

\begin{theorem}
    \label{thm:main-lower-bound}
    Suppose that $[-B, B]^d \subset \Theta$, and that the samples $(U_i)_{i=1}^n$ are i.i.d.\ and satisfy conditions \ref{itm:regularity-density-differentiable} and \ref{itm:regularity-fisher-continuous}, and that $\langle v, S_\theta(U_i) \rangle$ is sub-Gaussian with parameter $\rho$ for all unit vectors $v \in \mathbb{R}^d$.
    Then in a Gaussian multiple-access channel with $s$ channel uses, the worst-case risk under squared error loss of any estimation scheme $(\mathbf{f}, \hat\theta)$ must satisfy
    \begin{equation}
        \sup_{\theta \in \Theta} R(\theta; \mathbf{f}, \hat\theta) \ge \frac{d}{n} \cdot \frac{1}{\frac{s}{d} \rho^2 \log_2\left(1 + \frac{n P}{\sigma_\mathrm{n}^2}\right) + \frac{\pi^2}{n B^2}}.
    \end{equation}
\end{theorem}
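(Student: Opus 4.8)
The plan is to lower bound the minimax risk by a Bayes risk and then apply the multivariate van Trees (Bayesian Cramér--Rao) inequality, reducing the problem to two quantities: the Fisher information that the channel output $Y$ carries about $\theta$, and the Fisher information of a carefully chosen prior. Since $[-B,B]^d \subset \Theta$, I may restrict to priors supported on the cube and write $\sup_{\theta\in\Theta} R(\theta;\mathbf f,\hat\theta) \ge \int R(\theta;\mathbf f,\hat\theta)\,\lambda(d\theta)$ for any such prior $\lambda$. I would take $\lambda$ to be the product prior whose marginal on each coordinate has density proportional to $\cos^2(\pi\theta_j/2B)$ on $[-B,B]$; this density vanishes at the endpoints (so the boundary terms in van Trees vanish), and a direct computation gives prior Fisher information $\pi^2/B^2$ per coordinate, hence trace $\mathcal I(\lambda)=d\pi^2/B^2$.

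With this prior, the multivariate van Trees inequality for the total squared error gives
\[
\int R(\theta;\mathbf f,\hat\theta)\,\lambda(d\theta) \ge \frac{d^2}{\mathbb E_{\theta\sim\lambda}\!\left[\operatorname{tr} I_Y(\theta)\right] + d\pi^2/B^2},
\]
so it remains to upper bound the averaged output Fisher information. Here I invoke Theorem~\ref{thm:fisher-information-under-shannon-constraint}: because the $U_i$ are i.i.d.\ and $\langle v,S_\theta(U_i)\rangle$ is sub-Gaussian with parameter $\rho$ for every unit $v$, the joint score $\langle v,S_\theta(U_1,\dots,U_n)\rangle=\sum_i\langle v,S_\theta(U_i)\rangle$ is sub-Gaussian with parameter $\sqrt n\,\rho$, and the theorem then bounds $\operatorname{tr} I_Y(\theta)$ by a constant times $n\rho^2$ multiplied by the mutual information $I_\theta(U^n;Y)$ that the channel conveys about the samples at the true parameter $\theta$.

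The crucial step is bounding this mutual information by the Gaussian MAC capacity with the correct power. At a fixed true $\theta$ the samples $U_i\sim p_\theta$ are independent, so the inputs $X_i=f_i(U_i)$ are independent, and by the Markov chain $U^n\to X^n\to Y$ we have $I_\theta(U^n;Y)=I_\theta(X^n;Y)$. The power constraint \eqref{eq:power-constraint-in-scheme} forces $\tfrac1s\mathbb E_\theta\lVert X_i\rVert_2^2\le P$ for every $\theta$, so $I_\theta(X^n;Y)$ is the mutual information of a genuine $n$-user Gaussian MAC with independent inputs of per-user power at most $P$; the sum-rate bound \eqref{eq:capacity-of-awgn-mac-succinct} then yields $I_\theta(X^n;Y)\le \tfrac{s}{2}\log_2(1+nP/\sigma_\mathrm{n}^2)$ uniformly in $\theta$. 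Substituting gives $\mathbb E_\lambda[\operatorname{tr} I_Y(\theta)]\le ns\rho^2\log_2(1+nP/\sigma_\mathrm{n}^2)$ after tracking the constant in Theorem~\ref{thm:fisher-information-under-shannon-constraint}, and plugging this and $\mathcal I(\lambda)=d\pi^2/B^2$ into van Trees and factoring $nd$ out of the denominator produces exactly the claimed bound.

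I expect the main obstacle to be the mutual-information step rather than the van Trees bookkeeping: it is essential to evaluate $I_\theta(U^n;Y)$ at a \emph{fixed} $\theta$, where the inputs are independent, so that the MAC sum-rate capacity scales like $\log_2(1+nP/\sigma_\mathrm{n}^2)$; averaging the inputs over the prior first would correlate them and could inflate the received power to order $n^2P$, weakening the bound. I would also verify regularity condition~\ref{itm:regularity-channel-squareintegrable} so that Theorem~\ref{thm:fisher-information-under-shannon-constraint} applies, which is immediate from the bounded Gaussian conditional density $p(y\mid x_1,\dots,x_n)$ of the MAC, while conditions \ref{itm:regularity-density-differentiable}--\ref{itm:regularity-fisher-continuous} are assumed in the hypotheses.
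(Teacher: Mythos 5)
Your proposal is correct and follows essentially the same route as the paper's proof: bound the output Fisher information via Theorem~\ref{thm:fisher-information-under-shannon-constraint} with sub-Gaussian parameter $\sqrt{n}\rho$ for the joint score, pass through the Markov chain $U^n \to X^n \to Y$ to the Gaussian MAC sum-rate bound $\tfrac{s}{2}\log_2(1+nP/\sigma_\mathrm{n}^2)$ using independence of the inputs at fixed $\theta$, and feed the result into a Bayesian Cram\'er--Rao type bound with prior Fisher information $d\pi^2/B^2$. The only cosmetic difference is that you re-derive that last step explicitly via van Trees with a $\cos^2$ prior, whereas the paper cites it as Lemma~\ref{lem:barnes-squared-error-risk} from prior work; the substance is identical, and your remark about why the mutual information must be evaluated at a fixed $\theta$ is exactly the right point of care.
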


The proof for this builds on a relationship between Fisher information and Shannon information established by \cite{barnes-fisher-arxiv}, and a Bayesian Cramer-Rao type bound to relate minimax risk and Fisher information.
We provide the proof in Section~\ref{sec:proof-of-main-lower-bound}.

\subsection{Bounds for specific models}
\label{sec:analog-lower-bounds-specific-models}

In the case of the Gaussian location model, we can characterize this bound in terms of the sample variance $\sigma^2$.

\begin{corollary}
    \label{cor:gaussian-location-lower-bound}
    In the Gaussian location model with $s$ channel uses, the worst-case risk under squared error loss of any estimation scheme $(\mathbf{f}, \hat\theta)$ must satisfy
    \begin{equation}
        \sup_{\theta \in \Theta} R(\theta; \mathbf{f}, \hat\theta) \ge \frac{d\sigma^2}{n} \cdot \frac{1}{\frac{s}{d} \log_2\left(1 + \frac{n P}{\sigma_\mathrm{n}^2}\right) + \frac{\sigma^2}{B^2} \frac{\pi^2}{n}}.
    \end{equation}
\end{corollary}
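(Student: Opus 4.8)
The plan is to apply Theorem~\ref{thm:main-lower-bound} directly to the Gaussian location model, so the only substantive work is computing the sub-Gaussian parameter $\rho$ of the score function and then simplifying the resulting bound. First I would check the hypotheses. The containment $[-B,B]^d \subset \Theta$ holds because any $\theta \in [-B,B]^d$ satisfies $\lVert\theta\rVert_2^2 = \sum_{j=1}^d \theta_j^2 \le dB^2$, hence $\lVert\theta\rVert_2 \le B\sqrt d$, placing it in $\Theta$. Conditions \ref{itm:regularity-density-differentiable} and \ref{itm:regularity-fisher-continuous} were already noted to hold for this model, so the remaining task is to verify the sub-Gaussian score condition and pin down $\rho$.

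Next I would compute the score explicitly. For $p_\theta = \mathcal{N}(\theta, \sigma^2 I_d)$, differentiating $\log p_\theta(u) = -\frac{d}{2}\log(2\pi\sigma^2) - \frac{1}{2\sigma^2}\lVert u - \theta\rVert_2^2$ yields $S_\theta(u) = \frac{1}{\sigma^2}(u - \theta)$. For any unit vector $v$, since $U_i - \theta \sim \mathcal{N}(0, \sigma^2 I_d)$, the projection $\langle v, U_i - \theta\rangle$ is zero-mean Gaussian with variance $\sigma^2$, so $\langle v, S_\theta(U_i)\rangle = \frac{1}{\sigma^2}\langle v, U_i - \theta\rangle \sim \mathcal{N}(0, 1/\sigma^2)$. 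A zero-mean Gaussian with variance $1/\sigma^2$ is sub-Gaussian with parameter $\rho = 1/\sigma$, and this holds uniformly over all unit $v$ and all $\theta$, exactly as Theorem~\ref{thm:main-lower-bound} requires.

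Finally I would substitute $\rho^2 = 1/\sigma^2$ into the bound of Theorem~\ref{thm:main-lower-bound}, giving
\begin{equation*}
    \sup_{\theta \in \Theta} R(\theta; \mathbf{f}, \hat\theta) \ge \frac{d}{n} \cdot \frac{1}{\frac{s}{d} \frac{1}{\sigma^2} \log_2\!\left(1 + \frac{nP}{\sigma_\mathrm{n}^2}\right) + \frac{\pi^2}{nB^2}},
\end{equation*}
and then multiply numerator and denominator by $\sigma^2$ to obtain precisely the claimed expression $\frac{d\sigma^2}{n} \cdot [\frac{s}{d}\log_2(1 + \frac{nP}{\sigma_\mathrm{n}^2}) + \frac{\sigma^2}{B^2}\frac{\pi^2}{n}]^{-1}$. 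I do not anticipate a genuine obstacle: because the Gaussian score is linear in $u$, it is not merely sub-Gaussian but exactly Gaussian, so $\rho$ is both clean and tight, and the only care needed is tracking the factor $1/\sigma^2$ consistently through the variance computation and the final algebraic rearrangement.
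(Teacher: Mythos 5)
Your proposal is correct and follows essentially the same route as the paper: compute the score $S_\theta(u) = \frac{1}{\sigma^2}(u-\theta)$, observe that $\langle v, S_\theta(U_i)\rangle \sim \mathcal{N}(0, 1/\sigma^2)$ is sub-Gaussian with parameter $1/\sigma$, and substitute $\rho = 1/\sigma$ into Theorem~\ref{thm:main-lower-bound}. The final algebraic rearrangement is also carried out correctly, and your explicit check that $[-B,B]^d \subset \Theta$ is a small bonus the paper leaves implicit.
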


We can also derive a result for product Bernoulli models where elements of $\theta$ are close to $\frac12$, \textit{i.e.}, where the samples are dense.

\begin{corollary}
    \label{cor:product-bernoulli-lower-bound}
    Consider the relatively dense product Bernoulli model, where $U_1, \dots, U_n \sim \prod_{j=1}^d \mathrm{Bernoulli}(\theta_j)$, with $\Theta = [\frac12 - \varepsilon, \frac12 + \varepsilon]^d$, $\varepsilon \in (0, \frac12)$, with $s$ channel uses. The worst-case risk under squared error loss of any estimation scheme $(\mathbf{f}, \hat\theta)$ must satisfy
    \begin{equation}
        \sup_{\theta \in \Theta} R(\theta; \mathbf{f}, \hat\theta) \ge \frac{d}{n} \cdot \frac{1}{\frac{s}{d} \frac{1}{(\frac12 - 2\varepsilon^2)^2} \log_2\left(1 + \frac{n P}{\sigma_\mathrm{n}^2}\right) + \frac{\pi^2}{n\varepsilon^2}}.
    \end{equation}
\end{corollary}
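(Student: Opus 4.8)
The plan is to obtain this corollary as a direct specialization of Theorem~\ref{thm:main-lower-bound}. The one wrinkle is that the Bernoulli parameter space $\Theta = [\frac12 - \varepsilon, \frac12 + \varepsilon]^d$ is centered at $\frac12 \mathbf{1}$ rather than the origin, so it does not contain $[-B,B]^d$ for any $B > 0$ as the theorem's hypothesis literally demands. I would first remove this wrinkle by reparametrizing: set $\phi \triangleq \theta - \frac12 \mathbf{1}$, so that $\phi$ ranges over $[-\varepsilon, \varepsilon]^d$, which contains $[-B,B]^d$ with $B = \varepsilon$. Since the score function is unchanged under this constant shift ($\partial/\partial\theta_j = \partial/\partial\phi_j$) and squared-error risk is invariant under translating both $\theta$ and the estimator by $\frac12 \mathbf{1}$, the worst-case risk over $\Theta$ equals that of the shifted model, and Theorem~\ref{thm:main-lower-bound} applies with $B = \varepsilon$. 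This already accounts for the $\frac{\pi^2}{n\varepsilon^2}$ term in the bound.

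It remains to determine the sub-Gaussian parameter $\rho$ of $\langle v, S_\theta(U_i)\rangle$. Writing a single sample as $U_i = (U_{i1}, \dots, U_{id})$, the product structure gives the score componentwise as $[S_\theta(U_i)]_j = (U_{ij} - \theta_j)/[\theta_j(1-\theta_j)]$. Here $U_{ij} - \theta_j$ is a zero-mean random variable supported on $\{-\theta_j,\, 1-\theta_j\}$, an interval of width $1$, so by the fact recalled in Section~\ref{sec:analog-lower-bounds-preliminaries} it is sub-Gaussian with parameter $\frac12$; hence $v_j [S_\theta(U_i)]_j$ is sub-Gaussian with parameter $|v_j|/[2\theta_j(1-\theta_j)]$. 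The $d$ coordinates are independent, so by the sum rule for independent sub-Gaussians, $\langle v, S_\theta(U_i)\rangle$ is sub-Gaussian with parameter $\rho$ satisfying $\rho^2 = \sum_{j=1}^d v_j^2 / [4\theta_j^2(1-\theta_j)^2]$.

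Finally I would make this uniform over unit vectors $v$ and over $\theta \in \Theta$. On $[\frac12 - \varepsilon, \frac12 + \varepsilon]$ the product $\theta_j(1-\theta_j)$ is minimized at the endpoints, where it equals $\frac14 - \varepsilon^2$; combined with $\sum_j v_j^2 = 1$ this gives $\rho^2 \le \frac{1}{4(\frac14 - \varepsilon^2)^2} = \frac{1}{(\frac12 - 2\varepsilon^2)^2}$. Substituting $B = \varepsilon$ and this value of $\rho^2$ into the bound of Theorem~\ref{thm:main-lower-bound} produces exactly the claimed inequality; conditions \ref{itm:regularity-density-differentiable} and \ref{itm:regularity-fisher-continuous} for the Bernoulli model were already noted to hold in Section~\ref{sec:analog-lower-bounds-preliminaries}. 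There is no serious obstacle here: the only points requiring care are the translation argument that legitimizes taking $B = \varepsilon$ despite $\Theta$ being centered at $\frac12$, and the uniformization of $\rho$ over $\Theta$ via the worst-case (endpoint) value of $\theta_j(1-\theta_j)$.
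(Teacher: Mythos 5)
Your proposal is correct and follows essentially the same route as the paper: reparametrize $\Theta$ to the centered cube $[-\varepsilon,\varepsilon]^d$ so that $B=\varepsilon$, bound each coordinate's score by the boundedness-implies-sub-Gaussian fact to get parameter $1/(2\theta_j(1-\theta_j)) \le 1/(\tfrac12 - 2\varepsilon^2)$, combine the independent coordinates, and substitute into Theorem~\ref{thm:main-lower-bound}. The only (immaterial) difference is that you write the score as $(U_{ij}-\theta_j)/[\theta_j(1-\theta_j)]$ and uniformize over $\theta$ a bit more explicitly, whereas the paper states the two-point values directly.
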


The proofs of the above two corollaries, which both follow from Theorem~\ref{thm:main-lower-bound}, are in Section~\ref{sec:proof-lower-bound-specific-problem-instances}.

\subsection{Lower bound for a general multiple-access channel}
\label{sec:lower-bound-general-mac}

The result of Theorem~\ref{thm:main-lower-bound} can be generalized to other multiple-access channels, in terms of the total capacity of the network, that is, the maximum achievable sum of all rates in the network,
\begin{equation}
    C_\mathrm{total} = \max_{\prod_i p_i(x_{it})} I(X_{1t}, \dots, Y_{nt}; Y_t),
\end{equation}
where the maximum is over all product distributions for $(X_{1t}, \dots, X_{nt})$. We will refer to this quantity as the ``sum capacity'', recalling that it does not fully describe the capacity region of the network.

\begin{theorem}
    \label{thm:general-bound}
    Suppose that the samples $(U_i)_{i=1}^n$ are i.i.d.\ and satisfy conditions \ref{itm:regularity-density-differentiable} and \ref{itm:regularity-fisher-continuous}, and that $\langle v, S_\theta(U_i) \rangle$ is sub-Gaussian with parameter $\rho$ for all unit vectors $v \in \mathbb{R}^d$. Consider any discrete memoryless multiple-access channel that is constrained by the sum capacity $C_\mathrm{total}$ (per channel use), whose conditional density $p(y|x_1, \dots, x_n)$ is bounded. The worst-case risk of any estimation scheme $(\mathbf{f}, \hat\theta)$ must satisfy
    \begin{equation}
        \sup_{\theta \in \Theta} R(\theta; \mathbf{f}, \hat\theta) \ge \frac{d}{n} \cdot \frac{1}{2\frac{s}{d} \rho^2 C_\mathrm{total}+ \frac{\pi^2}{nB^2}}.
    \end{equation}
\end{theorem}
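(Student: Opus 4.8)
The plan is to follow the proof of Theorem~\ref{thm:main-lower-bound} almost verbatim, changing only the single step in which the Gaussian MAC capacity is invoked to instead use the sum capacity $C_\mathrm{total}$. The two structural ingredients are a Bayesian Cramér--Rao (van Trees) inequality relating the worst-case risk to the output Fisher information, and the Fisher--Shannon bound of Theorem~\ref{thm:fisher-information-under-shannon-constraint}, which relates that Fisher information to a mutual information. First I would place a product prior $\pi$ on $[-B,B]^d \subseteq \Theta$, taking each coordinate to be the raised-cosine density on $[-B,B]$, whose Fisher information is $\pi^2/B^2$. Lower-bounding the worst case by the Bayes risk, the multivariate van Trees inequality then gives
\begin{equation*}
    \sup_{\theta\in\Theta} R(\theta;\mathbf{f},\hat\theta) \ge \frac{d^2}{\mathbb{E}_\pi\!\left[\mathbf{tr}(I_Y(\theta))\right] + \frac{d\pi^2}{B^2}},
\end{equation*}
where $I_Y(\theta)$ is the Fisher information that the channel output $Y$ carries about $\theta$.

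Next I would bound the expected output Fisher information. Since the $U_i$ are i.i.d.\ and $\langle v, S_\theta(U_i)\rangle$ is sub-Gaussian with parameter $\rho$ for every unit vector $v$, the aggregate projected score $\langle v, S_\theta(U_1,\dots,U_n)\rangle = \sum_{i=1}^n \langle v, S_\theta(U_i)\rangle$ is a sum of independent sub-Gaussians, hence sub-Gaussian with parameter $\sqrt{n}\,\rho$. Applying Theorem~\ref{thm:fisher-information-under-shannon-constraint} to the channel $(U_1,\dots,U_n)\to Y$ then yields $\mathbf{tr}(I_Y(\theta)) \le 2 n\rho^2\, I(U_1,\dots,U_n; Y)$, exactly as in the proof of Theorem~\ref{thm:main-lower-bound}; the regularity condition \ref{itm:regularity-channel-squareintegrable} holds here because the conditional density $p(y|x_1,\dots,x_n)$ is assumed bounded.

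The one genuinely new step is to bound this mutual information by $sC_\mathrm{total}$ rather than by the Gaussian-specific capacity. By the data-processing inequality along $(U_1,\dots,U_n)\to(X_1,\dots,X_n)\to Y$ we have $I(U_1,\dots,U_n;Y)\le I(X_1,\dots,X_n;Y)$. Because the channel is memoryless, $I(X_1,\dots,X_n;Y)\le \sum_{t=1}^s I(X_{1t},\dots,X_{nt};Y_t)$, and since $X_{it}=[f_i(U_i)]_t$ depends only on $U_i$ with the $U_i$ independent, the inputs $(X_{1t},\dots,X_{nt})$ have a product distribution for each $t$; hence each term is at most $\max_{\prod_i p_i} I(X_{1t},\dots,X_{nt};Y_t) = C_\mathrm{total}$, giving $I(X_1,\dots,X_n;Y)\le sC_\mathrm{total}$. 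Substituting $\mathbb{E}_\pi[\mathbf{tr}(I_Y(\theta))]\le 2n\rho^2 s C_\mathrm{total}$ into the van Trees bound and rearranging,
\begin{equation*}
    \sup_{\theta\in\Theta} R(\theta;\mathbf{f},\hat\theta) \ge \frac{d^2}{2n\rho^2 s C_\mathrm{total} + \frac{d\pi^2}{B^2}} = \frac{d}{n}\cdot\frac{1}{2\frac{s}{d}\rho^2 C_\mathrm{total} + \frac{\pi^2}{nB^2}},
\end{equation*}
which is the claim.

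I expect the main subtlety to be the justification that each per-use term is bounded by $C_\mathrm{total}$: this requires both the memoryless decomposition and the observation that the encoders act on independent samples, so that the channel inputs form a product distribution at each use, which is precisely what lets the definition of sum capacity (a maximum over product input distributions) apply. Specializing $C_\mathrm{total}=\frac12\log_2(1+\tfrac{nP}{\sigma_\mathrm{n}^2})$ recovers Theorem~\ref{thm:main-lower-bound}, providing a useful consistency check on the constants.
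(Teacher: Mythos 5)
Your proposal is correct and follows essentially the same route as the paper, which proves this result by repeating the proof of Theorem~\ref{thm:main-lower-bound} verbatim and replacing the Gaussian MAC mutual-information bound \eqref{eq:main-result-gaussian-mac-shannon-information} with $sC_\mathrm{total}$, using the boundedness of $p(y|x_1,\dots,x_n)$ to verify condition \ref{itm:regularity-channel-squareintegrable}. Your explicit justification of the step $I(X_1,\dots,X_n;Y)\le sC_\mathrm{total}$ (single-letterization via memorylessness plus the observation that the per-use inputs form a product distribution) is a welcome filling-in of a detail the paper leaves implicit.
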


\begin{proof}
    Follow the proof of Theorem~\ref{thm:main-lower-bound}, but replace the right-hand side of \eqref{eq:main-result-gaussian-mac-shannon-information} with $sC_\mathrm{total}$ (\textit{i.e.}, $C_\mathrm{total}$ for $s$ channel uses). Note that the stipulation that $p(y|x_1, \dots, x_n)$ be bounded (by some different finite $M$) ensures that \eqref{eq:gaussian-mac-satisfies-regularity-channel-squareintegrable}, and hence \ref{itm:regularity-channel-squareintegrable}, is satisfied.
\end{proof}

\section{Proofs of Analog Achievability}
\label{sec:proofs-of-analog-achievability}

\subsection{Gaussian location model}
\label{sec:proof-achievability-gaussian-location}

In this section, we prove our main results for the Gaussian location model, Theorem~\ref{thm:gaussian-location-minimax-scheme} and Corollary~\ref{cor:gaussian-location-repetition-scheme}. In this model, the samples $U_i \sim \mathcal{N}(\theta, \sigma^2 I_d)$, where $\theta$ lies in an $\ell_2$-ball in a $d$-dimensional space, $\{\lVert\theta\rVert_2 \le B\sqrt d\}$, and the goal is to estimate $\theta$.

Since the multiple-access channel already produces a sum, one might suspect that that an estimation scheme emulating the sample mean would be a natural candidate, given its properties in classical estimation. Indeed, it is minimax among schemes using affine encoders (and any estimator). First, we show that this estimator is minimax for a fixed affine encoder, as we state formally in the following proposition.

\begin{proposition}
    \label{prop:gaussian-location-minimax-estimator-for-scale-and-offset-encoder}
    In the Gaussian location model, let the senders use any scale-and-offset encoding function $f(u) = \alpha u + \beta$  for some $\alpha \in \mathbb{R}, \beta \in \mathbb{R}^d$, common to all senders, and assume that this encoding function satisfies the power constraint, and that the channel is used $d$ times (\textit{i.e.}, $s = d$). Then the minimax estimator is given by
    \begin{equation*}
        \hat\theta_\mathrm{M}(Y) = \frac{1}{\alpha n} Y - \frac{1}{\alpha}\beta,
    \end{equation*}
    which yields risk
    \begin{equation}
        \label{eq:gaussian-location-minimax-risk}
        \mathbb{E}_\theta \lVert\hat\theta_\mathrm{M}(Y) - \theta\rVert_2^2 = \frac{d}{n} \left(\sigma^2 + \frac{\sigma_\mathrm{n}^2}{n \alpha^2}\right).
    \end{equation}
\end{proposition}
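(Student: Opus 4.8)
The plan is to reduce the problem to a standard Gaussian location estimation from a single observation, and then invoke the classical minimaxity of the sample mean. First I would substitute the affine encoder into the channel model: since $X_i = \alpha U_i + \beta$, the received vector is $Y = \alpha \sum_{i=1}^n U_i + n\beta + Z$, so that conditionally on $\theta$, using independence of the $U_i$ and $Z$ together with $\sum_i U_i \sim \mathcal{N}(n\theta, n\sigma^2 I_d)$, we have $Y \sim \mathcal{N}(\alpha n\theta + n\beta,\ (\alpha^2 n \sigma^2 + \sigma_\mathrm{n}^2) I_d)$. Defining $W \triangleq \frac{1}{\alpha n} Y - \frac{1}{\alpha}\beta$, which is precisely the proposed $\hat\theta_\mathrm{M}(Y)$, this transforms the observation into $W \sim \mathcal{N}(\theta, \nu^2 I_d)$ with $\nu^2 = \frac{\sigma^2}{n} + \frac{\sigma_\mathrm{n}^2}{\alpha^2 n^2}$. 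The problem thus collapses to estimating the mean of a spherical Gaussian with known per-coordinate variance $\nu^2$, for which $\hat\theta_\mathrm{M} = W$ is the identity (unbiased) estimator.

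Next I would compute the risk. Because $W - \theta \sim \mathcal{N}(0, \nu^2 I_d)$ for every $\theta$, the estimator is unbiased with risk $\mathbb{E}_\theta \lVert \hat\theta_\mathrm{M}(Y) - \theta\rVert_2^2 = d\nu^2 = \frac{d}{n}\bigl(\sigma^2 + \frac{\sigma_\mathrm{n}^2}{n\alpha^2}\bigr)$, which is exactly \eqref{eq:gaussian-location-minimax-risk}. In particular $\hat\theta_\mathrm{M}$ is an equalizer rule: its risk does not depend on $\theta$, so its worst-case risk equals this constant.

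To establish minimaxity I would use a least-favorable sequence of priors. Placing a Gaussian prior $\theta \sim \mathcal{N}(0, \tau^2 I_d)$ on the reduced problem, the Bayes-optimal estimator is the linear shrinkage $\frac{\tau^2}{\tau^2 + \nu^2} W$, whose Bayes risk is $\frac{d\nu^2\tau^2}{\nu^2 + \tau^2}$. Letting $\tau \to \infty$, this Bayes risk increases to $d\nu^2$. Since the minimax risk is at least the supremum over priors of the Bayes risk, the minimax risk is at least $d\nu^2$; as the equalizer $\hat\theta_\mathrm{M}$ attains worst-case risk exactly $d\nu^2$, it is minimax.

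The main obstacle is the bounded parameter space $\Theta = \{\lVert\theta\rVert_2 \le B\sqrt d\}$: the Gaussian priors above are supported on all of $\mathbb{R}^d$ and so directly certify minimaxity only for the unconstrained mean-estimation problem. To handle the constraint I would restrict attention to priors supported within $\Theta$ and argue that a suitable sequence of them (concentrating their mass appropriately within the ball) has Bayes risk approaching the equalizer value $d\nu^2$, matching the worst-case risk of $\hat\theta_\mathrm{M}$ from above. Reconciling the unconstrained least-favorable argument with the bounded $\Theta$ is the delicate step; the reduction and the risk computation in the earlier steps are routine Gaussian manipulations.
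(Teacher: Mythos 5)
Your core argument is correct and is essentially the paper's own proof: the paper's Lemma~\ref{lem:gaussian-location-bayes-estimator} computes the Bayes estimator under a $\mathcal{N}(\mu, b^2 I_d)$ prior by conditioning directly on $Y$ via the joint Gaussian covariances, and then lets $b^2 \to \infty$ so that the Bayes risks increase to $\frac{d}{n}(\sigma^2 + \frac{\sigma_\mathrm{n}^2}{n\alpha^2})$, exactly your least-favorable-sequence step. Your preliminary reduction to the sufficient statistic $W = \frac{1}{\alpha n}Y - \frac{1}{\alpha}\beta \sim \mathcal{N}(\theta, \nu^2 I_d)$ is a clean simplification that lets you quote the textbook one-observation result instead of redoing the conditioning, and your risk computation matches \eqref{eq:gaussian-location-minimax-risk}.

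One caution about your final paragraph: the fix you propose for the bounded parameter space cannot work. For priors supported inside $\Theta = \{\lVert\theta\rVert_2 \le B\sqrt d\}$, the supremum of the Bayes risks equals the minimax risk over $\Theta$, and that quantity is \emph{strictly} below $d\nu^2$ for finite $B$ --- for instance, the estimator that projects $W$ onto the convex set $\Theta$ satisfies $\lVert \Pi_\Theta(W) - \theta\rVert \le \lVert W - \theta\rVert$ pointwise with strict inequality on an event of positive probability, so the unprojected linear rule is not exactly minimax over the ball. No sequence of priors confined to $\Theta$ can therefore have Bayes risk approaching $d\nu^2$. The paper itself simply runs the unrestricted Gaussian-prior argument and does not engage with this point; the proposition is to be read as the classical (unconstrained-mean) minimax statement for the given affine encoder, and the correct move is to present the argument you already have rather than to pursue the restricted-prior repair.
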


\begin{remark}
    The estimator given by Proposition~\ref{prop:gaussian-location-minimax-estimator-for-scale-and-offset-encoder} is also the maximum likelihood estimator.
\end{remark}

The proof for this follows similar lines to the classical result using a least favorable sequence of priors, with modifications for channel noise.

\begin{lemma}
    \label{lem:gaussian-location-bayes-estimator}
    If $\boldsymbol\theta$ is distributed according to the prior $\mathcal{N}(\mu, b^2 I_d)$, and all senders use the common encoding function $f(u) = \alpha u + \beta$ for some $\alpha \in \mathbb{R}, \beta \in \mathbb{R}^d$, then the Bayes estimator $\hat\theta_{\mu,b^2}(y)$ is given by
    \begin{equation}
        \hat\theta_{\mu,b^2}(Y) = \mu + \frac{\alpha n b^2}{\alpha^2 n^2 b^2 + \alpha^2 n \sigma^2 + \sigma_\mathrm{n}^2} (Y - \alpha n \mu + n \beta),
    \end{equation}
    and the Bayes risk is
    \begin{equation}
        \mathbb{E} \lVert \hat\theta_{\mu,b^2}(Y) - \boldsymbol\theta \rVert^2 = \frac{d (\alpha^2 n \sigma^2 + \sigma_\mathrm{n}^2)}{\alpha^2 n^2 + \frac{\alpha^2 n \sigma^2 + \sigma_\mathrm{n}^2}{b^2}}.
    \end{equation}
\end{lemma}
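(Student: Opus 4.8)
The plan is to exploit the conjugacy of the Gaussian prior with the (conditionally) Gaussian observation $Y$, reducing everything to a standard jointly-Gaussian posterior computation. First I would identify the conditional law of the channel output. Since each sender transmits $X_i = \alpha U_i + \beta$ and the channel adds independent noise $Z \sim \mathcal{N}(0, \sigma_\mathrm{n}^2 I_d)$, we have $Y = \alpha \sum_{i=1}^n U_i + n\beta + Z$. Conditioning on $\boldsymbol\theta = \theta$, the samples are i.i.d.\ $\mathcal{N}(\theta, \sigma^2 I_d)$, so $\sum_i U_i \sim \mathcal{N}(n\theta, n\sigma^2 I_d)$ and hence
\begin{equation*}
    Y \mid \theta \sim \mathcal{N}\!\left(\alpha n \theta + n\beta,\ (\alpha^2 n \sigma^2 + \sigma_\mathrm{n}^2) I_d\right).
\end{equation*}

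Next I would observe that $(\boldsymbol\theta, Y)$ is jointly Gaussian, with $\mathbb{E}[Y] = \alpha n \mu + n \beta$, $\mathrm{Cov}(Y) = (\alpha^2 n^2 b^2 + \alpha^2 n \sigma^2 + \sigma_\mathrm{n}^2) I_d$ by the total-variance formula, and cross-covariance $\mathrm{Cov}(\boldsymbol\theta, Y) = \alpha n b^2 I_d$. The Bayes estimator under squared error loss is the posterior mean $\mathbb{E}[\boldsymbol\theta \mid Y]$, which for jointly Gaussian vectors is the affine function $\mathbb{E}[\boldsymbol\theta] + \mathrm{Cov}(\boldsymbol\theta, Y)\,\mathrm{Cov}(Y)^{-1}(Y - \mathbb{E}[Y])$. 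Substituting the covariances above and simplifying the scalar coefficient yields the claimed estimator. Equivalently, one can derive this by completing the square in $\theta$ in the joint log-density, or by applying the standard conjugate-prior update for a Gaussian mean.

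For the Bayes risk I would use that, in the jointly Gaussian case, the posterior covariance does not depend on the realized $Y$ and equals $\mathrm{Cov}(\boldsymbol\theta) - \mathrm{Cov}(\boldsymbol\theta, Y)\,\mathrm{Cov}(Y)^{-1}\,\mathrm{Cov}(Y, \boldsymbol\theta)$, an isotropic matrix. The Bayes risk $\mathbb{E}\lVert \hat\theta_{\mu, b^2}(Y) - \boldsymbol\theta\rVert^2$ is then the trace of this posterior covariance, i.e.\ $d$ times its scalar entry:
\begin{equation*}
    d\left(b^2 - \frac{\alpha^2 n^2 b^4}{\alpha^2 n^2 b^2 + \alpha^2 n \sigma^2 + \sigma_\mathrm{n}^2}\right) = \frac{d\,b^2(\alpha^2 n \sigma^2 + \sigma_\mathrm{n}^2)}{\alpha^2 n^2 b^2 + \alpha^2 n \sigma^2 + \sigma_\mathrm{n}^2},
\end{equation*}
and dividing numerator and denominator by $b^2$ gives the stated expression.

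Since the argument is an application of textbook Gaussian conjugacy, there is no deep obstacle; the work is purely bookkeeping. The main things to get right are tracking how the scalar gain $\alpha$ and the deterministic offset $n\beta$ propagate into the mean and covariance of $Y$, and correctly combining the two independent sources of variance---the $n$-fold averaged sampling noise (contributing $\alpha^2 n \sigma^2$) and the channel noise (contributing $\sigma_\mathrm{n}^2$)---into the single variance $\alpha^2 n \sigma^2 + \sigma_\mathrm{n}^2$. The isotropy of all the covariances (everything proportional to $I_d$) is what makes the risk separate cleanly into $d$ identical scalar problems and accounts for the overall factor of $d$.
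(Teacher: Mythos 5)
Your proposal is correct and follows essentially the same route as the paper's own proof: identify the jointly Gaussian structure of $(\boldsymbol\theta, Y)$, apply the standard conditional-mean formula for the Bayes estimator, and read off the Bayes risk as the trace of the ($Y$-independent) posterior covariance, with identical covariance computations. Your careful tracking of $\mathbb{E}[Y] = \alpha n\mu + n\beta$ in fact reproduces the sign that appears in the paper's derivation (a $-n\beta$ inside the centered term), which differs from the $+n\beta$ in the lemma statement itself---a typo in the statement, not a flaw in your argument.
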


\begin{proof}
    Under squared error loss, the Bayes estimator for $\mathcal{N}(\mu, b^2 I_d)$ is (by well-known theorem, \textit{e.g.}\ \cite[Cor.\ 4.1.2(a)]{lehmann-tpe}) $\hat\theta_{\mu, b^2}(y) = \mathbb{E}(\boldsymbol\theta|y)$, which we will evaluate. The relevant covariance matrices are
    \begin{align*}
        \Sigma_Y &= (\alpha^2 n^2 b^2 + \alpha^2 n \sigma^2 + \sigma_\mathrm{n}^2) I_d, \\
        \Sigma_{Y\boldsymbol\theta}
        &= \mathbb{E}\!\left[\left(\alpha n W + \alpha \textstyle\sum_i V_i\right)  W^\top\right]
        = \alpha n b^2 I_d.
    \end{align*}
    Then the Bayes estimate is given by
    \begin{align*}
        &\hat\theta_{\mu,b^2}(Y)
        = \mathbb{E}(\boldsymbol\theta|Y)
        = \mathbb{E}\boldsymbol\theta + \Sigma_{\boldsymbol\theta Y}\Sigma_Y^{-1}(Y - \mathbb{E} Y) \\
        &\qquad = \mu + \alpha n b^2 \cdot \frac{1}{\alpha^2 n^2 b^2 + \alpha^2 n \sigma^2 + \sigma_\mathrm{n}^2} \cdot (Y - \alpha n \mu - n \beta),
    \end{align*}
    and since this estimator is unbiased, the squared error is given by the trace of the conditional variance,
    \begin{align*}
        \mathbb{E}\lVert\hat\theta_{\mu,b^2} - \boldsymbol\theta\rVert^2
        &= \mathbf{tr}\,\mathbf{var}(\boldsymbol\theta | Y)
        = \mathbf{tr}(\Sigma_{\boldsymbol\theta} - \Sigma_{\boldsymbol\theta Y}\Sigma_Y^{-1}\Sigma_{Y\boldsymbol\theta}) \\
        &= db^2 - \frac{d(\alpha n b^2)^2}{\alpha^2 n^2 b^2 + \alpha^2 n \sigma^2 + \sigma_\mathrm{n}^2}. \qedhere
    \end{align*}
\end{proof}

\begin{proof}[Proof of Proposition~\ref{prop:gaussian-location-minimax-estimator-for-scale-and-offset-encoder}]
    Take the Bayes estimator from Lemma~\ref{lem:gaussian-location-bayes-estimator}. Let $b^2 \rightarrow \infty$, then we have a sequence of priors $\mathcal{N}(\mu, b^2)$ yielding increasing Bayes risk converging to
    \begin{align*}
        \lim_{b \rightarrow \infty} \mathbb{E}\lVert\hat\theta_{\mu,b^2} - \boldsymbol\theta\rVert^2 = \frac{d(\alpha^2 n \sigma^2 + \sigma_\mathrm{n}^2)}{\alpha^2 n^2} = \frac{d}{n} \left(\sigma^2 + \frac{\sigma_\mathrm{n}^2}{n\alpha^2}\right).
    \end{align*}
    The minimax estimator is then
    \begin{equation*}
        \lim_{b \rightarrow \infty} \hat\theta_{\mu,b^2}(Y) = \mu + \frac{1}{\alpha n} Y - \mu - \frac{1}{\alpha}\beta. \qedhere
    \end{equation*}
\end{proof}

In the absence of a power constraint, the offset $\beta$ has no effect---since it is known, it is easily cancelled by the receiver's estimator. Intuitively, with a power constraint, one would expect no offset to be preferable. In Theorem~\ref{thm:gaussian-location-minimax-scheme}, where we find the best choice of $(\alpha, \beta)$, we find that this is indeed the case.

\begin{proof}[Proof of Theorem~\ref{thm:gaussian-location-minimax-scheme}]
    For any given $\alpha, \beta$, the minimax risk from Proposition~\ref{prop:gaussian-location-minimax-estimator-for-scale-and-offset-encoder} is decreasing in $\alpha$. Therefore, we choose the largest $\alpha$ satisfying the power constraint \eqref{eq:power-constraint-in-scheme}. Note that
    \begin{align*}
        \mathbb{E}_\theta \!\left[ \lVert X_i \rVert_2^2 \right]
        &= \alpha^2 (\lVert\theta\rVert^2 + d\sigma^2) + 2\alpha\theta^\top\beta + \lVert\beta\rVert^2 \\
        &= \lVert \alpha \theta + \beta \rVert_2^2 + \alpha^2 d\sigma^2.
        \numberthis
    \end{align*}
    We thus solve
    \begin{equation}
        \label{eq:gaussian-location-power-constraint-optimization-problem}
        \begin{array}{rll}
            \text{maximize} & \alpha \\
            \text{subject to} & \lVert \alpha \theta + \beta \rVert_2^2 + \alpha^2 d\sigma^2 \le dP &\forall \theta: \lVert\theta\rVert \le \sqrt d B.
        \end{array}
    \end{equation}
    If we relax the constraint to $\lVert \beta \pm \alpha B \mathbf{1} \rVert_2^2 + \alpha^2 d\sigma^2 \le dP$, we can use Lagrange multipliers to find the solution
    \begin{equation}
        \alpha = \sqrt{\frac{P}{B^2 + \sigma^2}}, \quad \beta = 0,
    \end{equation}
    and verify that it also satisfies the constraints of, and is therefore also a solution to, \eqref{eq:gaussian-location-power-constraint-optimization-problem}.
\end{proof}

We now turn to the case where $s > d$. A natural extension of the scheme from Theorem~\ref{thm:gaussian-location-minimax-scheme} would be to transmit repetitions of the sample.

\begin{lemma}
    \label{lem:repetition-schemes-for-affine-estimators}
    Let $(\mathbf{f}, \hat\theta)$ be a scheme with $\hat\theta(Y)$ affine in $Y$ and consider a scheme $(\mathbf{f}_R, \hat\theta_\mathrm{R})$ that repeats the encoding function $m$ times and averages the estimates for each repetition, $\hat\theta_\mathrm{R}(Y) = \frac1m \sum_{j=1}^m \hat\theta([Y]_j)$, where $[Y]_j$ is the part of $Y$ corresponding to the $j$th repetition. The risk of $(\mathbf{f}_R, \hat\theta_\mathrm{R})$ is the same as for $(\mathbf{f}, \hat\theta)$, but with $\sigma_\mathrm{n}^2/m$ in place $\sigma_\mathrm{n}^2$.
\end{lemma}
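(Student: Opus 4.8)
The plan is to exploit the affineness of $\hat\theta$ to show that averaging the $m$ per-repetition estimates is equivalent to feeding a single \emph{less noisy} observation into the base estimator. First I would fix notation for the repetition scheme: each node transmits the same encoded block $f_i(U_i)$ in each of the $m$ repetitions, so the $j$th received block is $[Y]_j = W + Z_j$, where $W \triangleq \sum_{i=1}^n f_i(U_i)$ is the (random but repetition-independent) noiseless superposition and $Z_j$ is the channel noise in the $j$th block. The structural fact to record here is that the $\{Z_j\}$ are mutually independent $\mathcal{N}(0, \sigma_\mathrm{n}^2 I)$ vectors—distinct repetitions occupy distinct channel uses and the MAC noise $Z \sim \mathcal{N}(0, \sigma_\mathrm{n}^2 I_s)$ has independent coordinates—whereas $W$ is shared across all blocks and is independent of the noise.

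Next I would write the affine base estimator as $\hat\theta(y) = Ay + c$ and push the average through the linear map:
\begin{equation*}
    \hat\theta_\mathrm{R}(Y) = \frac1m \sum_{j=1}^m \hat\theta([Y]_j) = A\left(W + \frac1m \sum_{j=1}^m Z_j\right) + c = \hat\theta(W + \bar Z),
\end{equation*}
where $\bar Z \triangleq \frac1m \sum_{j=1}^m Z_j$. The averaging commutes with the estimator precisely because $\hat\theta$ is affine; this is the only place affineness is used, and it is exactly what lets the $m$ noise realizations be averaged while the signal term $W$ is left intact.

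Finally I would identify the distribution of the effective observation. Since $\bar Z$ is an average of $m$ independent $\mathcal{N}(0, \sigma_\mathrm{n}^2 I)$ vectors, $\bar Z \sim \mathcal{N}(0, \frac{\sigma_\mathrm{n}^2}{m} I)$, and because $\bar Z$ is independent of $W$ while $W$ has the same law as the base scheme's noiseless superposition, the pair $(W, \bar Z)$ has the same joint distribution as $(W, Z')$ with $Z' \sim \mathcal{N}(0, \frac{\sigma_\mathrm{n}^2}{m} I)$. Hence the risk $\mathbb{E}_\theta \lVert \hat\theta_\mathrm{R}(Y) - \theta \rVert_2^2 = \mathbb{E}_\theta \lVert \hat\theta(W + \bar Z) - \theta \rVert_2^2$ is identical to the base-scheme risk $\mathbb{E}_\theta \lVert \hat\theta(Y) - \theta \rVert_2^2$ with $\sigma_\mathrm{n}^2$ replaced by $\sigma_\mathrm{n}^2/m$, as claimed.

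I expect no substantive obstacle: the content is bookkeeping once affineness is invoked. The one point requiring care is keeping straight which quantities are common across repetitions (the sample-dependent term $W$) and which are independent (the noise blocks $Z_j$), since mislabeling these would incorrectly scale the signal term. It is also worth remarking in passing that the repetition scheme inherits the base scheme's power feasibility, as transmitting the same block $m$ times leaves the per-symbol average power unchanged, although this is not needed for the risk identity itself.
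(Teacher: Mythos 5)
Your proof is correct and rests on the same key observation as the paper's: affineness lets the average pass through the estimator so that the $m$ noise blocks average down to an effective noise of variance $\sigma_\mathrm{n}^2/m$ while the signal term is unchanged. The only cosmetic difference is that you package this as a distributional identity $\hat\theta_\mathrm{R}(Y) = \hat\theta(W + \bar Z)$ with $\bar Z \sim \mathcal{N}(0, \frac{\sigma_\mathrm{n}^2}{m} I)$, whereas the paper states it as a bias--variance computation (bias unchanged, variance $\sum_i \mathbf{var}(A X_i) + \frac1m \mathbf{var}(A Z)$); both are sound and yield the same conclusion.
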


\begin{proof}
    The bias of the estimator is unaffected by the repetition (and is independent of $\sigma_\mathrm{n}^2$), and if the original estimator is written as $\hat\theta(Y) = AY + c$, the variance can be shown to be $\sum_i \mathbf{var}(A X_i) + \frac1m \mathbf{var}(AZ)$. Relative to the original estimator variance, this is equivalent to dividing $\sigma_\mathrm{n}^2 I$ by $m$.
\end{proof}

This then yields the achievability result of Corollary~\ref{cor:gaussian-location-repetition-scheme}.
\begin{proof}[Proof of Corollary~\ref{cor:gaussian-location-repetition-scheme}]
    Apply Lemma~\ref{lem:repetition-schemes-for-affine-estimators} to Theorem~\ref{thm:gaussian-location-minimax-scheme}, with $m = \lfloor s/d \rfloor$ and ignoring the leftover channel uses.
\end{proof}

Comparing this to the $s = d$ case, the repetition reduces the noise by a factor of roughly $s/d$, which is the expected effect of averaging a repeated transmission. The minimax risk then converges more quickly to the noiseless case as $s/d \rightarrow \infty$.

\subsection{Product Bernoulli model}
\label{sec:proof-achievability-product-bernoulli}

In this section, we prove our main results for the Bernoulli parameter model, Theorem~\ref{thm:product-bernoulli-affine-estimator-minimax-risk} and Corollary~\ref{cor:product-bernoulli-repetition-scheme}. In this model, $U_i \sim \prod_{i=1}^d \mathrm{Bernoulli}(\theta)$, and the goal is to estimate $\theta$, which is in $[0, 1]^d$. Our calculations in this section will work with the parameterized encoding function common to all senders
\begin{equation}
    \label{eq:bernoulli-parameter-encoding-function}
    f_C(u) = \begin{cases}
        -C, &\text{ if }u = 0\\
        C,  &\text{ if }u = 1.
    \end{cases}
\end{equation}
Our analysis of the Bernoulli parameter model focuses on the scalar case, as stated in Proposition~\ref{prop:scalar-bernoulli-minimax-affine-scheme} below. Theorem~\ref{thm:product-bernoulli-affine-estimator-minimax-risk} will then follow by extension to independent dimensions.

\begin{proposition}
    \label{prop:scalar-bernoulli-minimax-affine-scheme}
    In the scalar Bernoulli parameter model ($d = 1$), consider the class of all estimation schemes using affine estimators $\hat\theta_{\alpha,\beta}(Y) = \alpha Y + \beta, \alpha, \beta \in \mathbb{R}$ (and any scalar encoding function with $s=1$). The minimax scheme in this class is the one using the encoding function
    \begin{equation}
        f_\mathrm{M}(u) = \begin{cases}
            -\sqrt{P}, &\text{ if }u = 0\\
            \sqrt{P},  &\text{ if }u = 1,
        \end{cases}
    \end{equation}
    and the estimator $\hat\theta_{\mathrm{M}}(Y) = \alpha_\mathrm{M} Y + \beta_\mathrm{M}$, where $\beta_\mathrm{M} = \frac12$ and $\alpha_\mathrm{M}$ is as provided in \eqref{eq:bernoulli-affine-estimator-minimax-alpha}. The minimax risk given by this choice of $(\alpha_\mathrm{M}, \beta_\mathrm{M})$ is
    \begin{equation}
        \sup_\theta R(\theta; f_\mathrm{M}, \hat\theta_{\mathrm{M}}) = \begin{cases}
            \frac{1}{4 (\sqrt n + 1)^2} \left(1 + \frac{\sigma_\mathrm{n}^2}{nP}\right), &\text{ if }\sigma_\mathrm{n}^2 \le n^{3/2} P, \\[6pt]
            \frac{1}{4} \cdot \frac{1}{1 + n \cdot {n P}/{\sigma_\mathrm{n}^2}}, &\text{ if }\sigma_\mathrm{n}^2 \ge n^{3/2} P.
        \end{cases}
        \label{eq:bernoulli-affine-estimator-minimax-risk}
    \end{equation}
\end{proposition}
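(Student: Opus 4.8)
The plan is to write the worst-case risk of an arbitrary affine scheme in closed form, reduce the encoder to the claimed antisymmetric one, fix $\beta = \tfrac12$ by a symmetry argument, and then finish with a one-dimensional optimization over $\alpha$ that splits into the two SNR regimes. First I would set up the risk. A scalar encoder is determined by $a = f(0)$ and $b = f(1)$, and the power constraint $\mathbb{E}_\theta[f(U)^2] = (1-\theta)a^2 + \theta b^2 \le P$ for all $\theta\in[0,1]$ forces $a^2, b^2 \le P$ (evaluate at $\theta \in \{0,1\}$). Writing $X_i = a + (b-a)U_i$, so that $\mathbb{E}_\theta Y = na + n(b-a)\theta$ and $\mathbf{var}_\theta Y = n(b-a)^2\theta(1-\theta) + \sigma_\mathrm{n}^2$, the bias--variance decomposition of $\hat\theta_{\alpha,\beta} = \alpha Y + \beta$ gives
\begin{equation*}
    R(\theta) = \big[(\alpha n(b-a) - 1)\theta + \alpha n a + \beta\big]^2 + \alpha^2\big[n(b-a)^2\theta(1-\theta) + \sigma_\mathrm{n}^2\big].
\end{equation*}

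Next I would reduce to the antisymmetric encoder. Substituting $A = \alpha(b-a)$ and $q = \alpha n a + \beta$ turns this into $R(\theta) = [(nA-1)\theta + q]^2 + nA^2\theta(1-\theta) + \alpha^2\sigma_\mathrm{n}^2$, in which the pair $(b-a,\alpha)$ enters, beyond $A$ and $q$, only through the single term $\alpha^2\sigma_\mathrm{n}^2 = A^2\sigma_\mathrm{n}^2/(b-a)^2$. Hence for any fixed $(A,q)$ the risk is minimized at every $\theta$ simultaneously by making $(b-a)^2$ as large as the power constraint allows, namely $(b-a)^2 = 4P$, that is $a = -\sqrt P$, $b = \sqrt P$; the value of $q$ is then freely recovered through $\beta$. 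This pins down $f_\mathrm{M}$, and conveniently the power constraint then holds with equality, $\mathbb{E}_\theta[f_\mathrm{M}(U)^2] = P$ for every $\theta$.

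To fix $\beta$, I would invoke the reflection symmetry $\theta \mapsto 1-\theta$: under $p_{1-\theta}$ the samples are distributed as $1 - U_i$, and since $f_\mathrm{M}(1-u) = -f_\mathrm{M}(u)$ we get $Y \mapsto -Y$ in distribution. Consequently the reflected affine estimator $1 - \hat\theta_{\alpha,\beta}(-Y) = \alpha Y + (1-\beta)$ has worst-case risk equal to that of $\hat\theta_{\alpha,\beta}$, and by convexity of squared-error risk in the estimator their average $\alpha Y + \tfrac12$ has worst-case risk no larger; thus for each $\alpha$ the choice $\beta_\mathrm{M} = \tfrac12$ is optimal. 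With $\beta = \tfrac12$ the bias collapses to $(2\alpha n\sqrt P - 1)(\theta - \tfrac12)$, and writing $t = \theta - \tfrac12$ the risk becomes affine in $t^2$,
\begin{equation*}
    R(\theta) = \big(\gamma^2 - 4\alpha^2 nP\big)t^2 + \alpha^2(nP + \sigma_\mathrm{n}^2), \qquad \gamma \triangleq 2\alpha n\sqrt P - 1,
\end{equation*}
so the supremum over $\theta \in [0,1]$ (i.e.\ $t^2 \in [0,\tfrac14]$) is attained at $t^2 = \tfrac14$ when $\gamma^2 \ge 4\alpha^2 nP$ and at $t^2 = 0$ otherwise.

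Finally I would carry out the minimization over $\alpha$. In the first region the worst-case risk is $\gamma^2/4 + \alpha^2\sigma_\mathrm{n}^2$, whose unconstrained minimizer is $\alpha = n\sqrt P/[2(n^2 P + \sigma_\mathrm{n}^2)]$; substituting back shows its defining inequality $\gamma^2 \ge 4\alpha^2 nP$ is equivalent to $\sigma_\mathrm{n}^2 \ge n^{3/2}P$, and the risk evaluates to $\tfrac14\,\sigma_\mathrm{n}^2/(n^2P + \sigma_\mathrm{n}^2)$. In the complementary region the worst-case risk is $\alpha^2(nP + \sigma_\mathrm{n}^2)$, increasing in $|\alpha|$, so it is minimized at the smallest feasible $|\alpha|$, namely the smaller root of $\gamma^2 = 4\alpha^2 nP$, which factors (using $n-1 = (\sqrt n - 1)(\sqrt n + 1)$) to $\alpha = 1/[2\sqrt{nP}(\sqrt n + 1)]$, giving risk $\tfrac{1}{4(\sqrt n+1)^2}(1 + \sigma_\mathrm{n}^2/(nP))$; this root governs precisely when $\sigma_\mathrm{n}^2 \le n^{3/2}P$. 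Matching the two regimes reproduces \eqref{eq:bernoulli-affine-estimator-minimax-alpha} and \eqref{eq:bernoulli-affine-estimator-minimax-risk}. The main obstacle is this last constrained optimization: one must track which of the two expressions for the supremum is active as $\alpha$ varies, verify that the unconstrained minimizer of the first expression lands in its own region exactly when $\sigma_\mathrm{n}^2 \ge n^{3/2}P$, and confirm that the crossover occurs cleanly at $\sigma_\mathrm{n}^2 = n^{3/2}P$, where both the $\alpha$ values and both risk values agree.
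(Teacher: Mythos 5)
Your proof is correct, but it takes a genuinely different route from the paper's. The paper first reduces a general encoder to the antisymmetric $f_C$ by a pure shift (Lemma~\ref{lem:bernoulli-parameter-model-offset-does-nothing}, which preserves the risk exactly) and separately argues that the resulting minimax risk is decreasing in $C^2$; you instead reparameterize by $A = \alpha(b-a)$ and $q = \alpha n a + \beta$ and observe that the encoder then enters only through $\alpha^2\sigma_\mathrm{n}^2 = A^2\sigma_\mathrm{n}^2/(b-a)^2$, so the full-swing encoder dominates pointwise in $\theta$ --- a cleaner, one-shot reduction (you should dispose of the degenerate case $A = 0$, where the worst-case risk is trivially at least $\tfrac14$). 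More substantively, the paper \emph{verifies} the claimed $(\alpha_\mathrm{M},\beta_\mathrm{M})$ by a three-case elimination argument (Proposition~\ref{prop:bernoulli-minimax-affine-estimator}) built on convexity facts about the risk, whereas you \emph{derive} it: a symmetrization under $\theta \mapsto 1-\theta$ pins $\beta = \tfrac12$, after which the worst-case risk becomes $\max\{\gamma^2/4 + \alpha^2\sigma_\mathrm{n}^2,\ \alpha^2(nP+\sigma_\mathrm{n}^2)\}$, a maximum of two convex functions of $\alpha$, and the two regimes fall out of whether the unconstrained minimizer of the first branch lies where that branch is active. Your route explains where the threshold $\sigma_\mathrm{n}^2 = n^{3/2}P$ comes from and why the worst case migrates from $\theta\in\{0,1\}$ to $\theta=\tfrac12$; its only delicate point is the final constrained comparison between the two branches, which you correctly flag and which does close (since the pointwise maximum dominates each branch, the interior minimizer of the first branch is globally optimal exactly when it lies in its own region, and otherwise the common boundary point $\alpha_\mathrm{lo}$ wins). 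The paper's case analysis buys a fully explicit verification at the cost of having to guess the answer first.
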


Our steps for proving Proposition~\ref{prop:scalar-bernoulli-minimax-affine-scheme} will be first to establish the minimax risk for the common encoding function $f_C$, then to show that a scheme using any other encoding function can be transformed to one using $f_C$ for some $C$ of equal risk, and finally to show that the optimal value for $C$ is $\sqrt P$. Before we continue, we compute the risk for a general affine estimator.

\begin{lemma}
    In the scalar Bernoulli parameter model ($d = 1$), if all senders use the encoding function $f_C$ \eqref{eq:bernoulli-parameter-encoding-function}, and the receiver uses the affine estimator $\hat\theta_{\alpha, \beta}(Y) = \alpha Y + \beta$,
    then the risk is
    \begin{align*}
        R(\theta; f_C, \hat\theta_{\alpha, \beta})
        &= \alpha^2 \left[ 4nC^2 \theta(1 - \theta) + \sigma_\mathrm{n}^2 \right] \\
        &\qquad + \left[ \alpha n C (2\theta - 1) + \beta - \theta \right]^2.
        \numberthis
        \label{eq:bernoulli-affine-estimator-risk}
    \end{align*}
\end{lemma}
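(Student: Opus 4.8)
The plan is a direct bias--variance decomposition of the squared error. The starting observation is that the encoding function $f_C$ in \eqref{eq:bernoulli-parameter-encoding-function} can be written compactly as $X_i = f_C(U_i) = C(2U_i - 1)$, since $2U_i - 1$ equals $-1$ when $U_i = 0$ and $+1$ when $U_i = 1$. First I would compute the mean and variance of a single transmitted symbol: because $\mathbb{E}_\theta[U_i] = \theta$ and $\mathbf{var}_\theta(U_i) = \theta(1-\theta)$, we obtain $\mathbb{E}_\theta[X_i] = C(2\theta - 1)$ and $\mathbf{var}_\theta(X_i) = 4C^2 \theta(1-\theta)$.

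Next, since $Y = \sum_{i=1}^n X_i + Z$ with the samples $U_i$ i.i.d.\ and $Z \sim \mathcal{N}(0, \sigma_\mathrm{n}^2)$ independent of them, I would aggregate over the $n$ senders. By independence, $\mathbb{E}_\theta[Y] = nC(2\theta-1)$ and $\mathbf{var}_\theta(Y) = 4nC^2\theta(1-\theta) + \sigma_\mathrm{n}^2$. The estimator $\hat\theta_{\alpha,\beta}(Y) = \alpha Y + \beta$ is affine, so its mean is $\alpha n C(2\theta - 1) + \beta$ and its variance is $\alpha^2 \mathbf{var}_\theta(Y)$.

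Finally, I would apply the standard decomposition $R(\theta; f_C, \hat\theta_{\alpha,\beta}) = \mathbf{var}_\theta(\hat\theta_{\alpha,\beta}(Y)) + (\mathbb{E}_\theta[\hat\theta_{\alpha,\beta}(Y)] - \theta)^2$. Substituting the variance yields the first bracketed term $\alpha^2[4nC^2\theta(1-\theta) + \sigma_\mathrm{n}^2]$, and substituting the mean yields the squared-bias term $(\alpha n C(2\theta-1) + \beta - \theta)^2$, which together give \eqref{eq:bernoulli-affine-estimator-risk}.

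This lemma presents no genuine obstacle; it is a routine moment computation. The only point requiring a little care is the factor of $4$ in the variance, which arises because the $\pm C$ encoding scales $U_i$ by $2$ (\textit{i.e.}\ $\mathbf{var}(2U_i) = 4\,\mathbf{var}(U_i)$). Miscounting it would propagate into both the sample-variance and noise terms of the subsequent minimax optimization in Proposition~\ref{prop:scalar-bernoulli-minimax-affine-scheme}, so I would double-check this coefficient before proceeding.
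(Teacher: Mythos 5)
Your proposal is correct and matches the paper's proof essentially verbatim: both compute the mean and variance of $Y$ from the $\pm C$ encoding (yielding $\mathbb{E}_\theta[Y] = nC(2\theta-1)$ and $\mathbf{var}_\theta(Y) = 4nC^2\theta(1-\theta) + \sigma_\mathrm{n}^2$) and then apply the standard variance-plus-squared-bias decomposition. The factor of $4$ you flag is handled identically in the paper, so there is nothing further to check.
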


\begin{proof}
    Recall that $Y = \sum_{i=1}^n f_C(U_i) + Z$ and that $f_C(U_i) = C$ w.p.\ $\theta$  and $f_C(U_i) = -C$ w.p.\ $1 - \theta$.
    The variance and bias of the estimator are then
    \begin{align*}
        \mathbf{var}_\theta[\hat\theta_{\alpha, \beta}(Y)]
        &= \alpha^2 \mathbf{var}(Y)
        = \alpha^2 \left[ 4nC^2 \theta(1 - \theta) + \sigma_\mathrm{n}^2 \right]. \\
        \mathbf{bias}_\theta[\hat\theta_{\alpha, \beta}(Y)] &\triangleq \mathbb{E}_\theta \hat\theta_{\alpha, \beta}(Y) - \theta = \alpha n C (2\theta - 1) + \beta - \theta.
    \end{align*}
    The result then follows from combining these as
    \begin{equation*}
        \mathbb{E}_\theta[\hat\theta_{\alpha, \beta}(Y) - \theta]_2^2 = \mathbf{var}_\theta[\hat\theta_{\alpha, \beta}(Y)] + (\mathbf{bias}_\theta[\hat\theta_{\alpha, \beta}(Y)])^2. \qedhere
    \end{equation*}
\end{proof}

The bulk of the work in proving Proposition~\ref{prop:scalar-bernoulli-minimax-affine-scheme} is in showing Proposition~\ref{prop:bernoulli-minimax-affine-estimator}, which establishes the minimax estimator for the encoding function $f_C$.

\begin{proposition}
    \label{prop:bernoulli-minimax-affine-estimator}
    In the scalar Bernoulli parameter model, let all senders use the encoding function $f_C(u)$ from \eqref{eq:bernoulli-parameter-encoding-function}, and consider the class of all affine estimators $\hat\Theta_\mathrm{aff} = \{\hat\theta_{\alpha,\beta}(Y) = \alpha Y + \beta, \alpha, \beta \in \mathbb{R}\}$. The minimax affine estimator is given by $\hat\theta_{\mathrm{M}}(Y) = \alpha_\mathrm{M} Y + \beta_\mathrm{M}$, where $\beta_\mathrm{M} = \frac12$ and
    \begin{equation}
        \alpha_\mathrm{M} = \begin{cases}
            \frac{1}{2 \sqrt{n} C (\sqrt n + 1)},
                &\text{ if }\sigma_\mathrm{n}^2 \le n^{3/2} C^2, \\[6pt]
            \frac{nC}{2(\sigma_\mathrm{n}^2 + n^2 C^2)},
                &\text{ if }\sigma_\mathrm{n}^2 \ge n^{3/2} C^2.
        \end{cases}
        \label{eq:bernoulli-affine-estimator-symmetric-encoder-minimax-alpha}
    \end{equation}
    The minimax risk given by this choice of $(\alpha_\mathrm{M}, \beta_\mathrm{M})$ is
    \begin{equation}
        \sup_\theta R(\theta; f_\mathrm{M}, \hat\theta_\mathrm{M}) = \begin{cases}
            \frac{1}{4 (\sqrt n + 1)^2} \left(1 + \frac{\sigma_\mathrm{n}^2}{nC^2}\right), &\text{ if }\sigma_\mathrm{n}^2 \le n^{3/2} C^2, \\[6pt]
            \frac{1}{4} \cdot \frac{1}{1 + n \cdot {n C^2}/{\sigma_\mathrm{n}^2}}, &\text{ if }\sigma_\mathrm{n}^2 \ge n^{3/2} C^2.
        \end{cases}
        \label{eq:bernoulli-affine-estimator-symmetric-encoder-minimax-risk}
    \end{equation}
\end{proposition}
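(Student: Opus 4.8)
The plan is to start from the risk expression \eqref{eq:bernoulli-affine-estimator-risk} and proceed in two stages: first pin down the optimal offset $\beta$ by a symmetry argument, then optimize over the scale $\alpha$ by reducing the worst-case risk to the pointwise maximum of two parabolas in $\alpha$.

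For the first stage, I would exploit the symmetry $\theta \mapsto 1-\theta$. Substituting directly into \eqref{eq:bernoulli-affine-estimator-risk} shows that $R(1-\theta; f_C, \hat\theta_{\alpha,1-\beta}) = R(\theta; f_C, \hat\theta_{\alpha,\beta})$, since the variance term is invariant and the bias term merely changes sign; hence $\sup_\theta R(\theta; f_C, \hat\theta_{\alpha,\beta}) = \sup_\theta R(\theta; f_C, \hat\theta_{\alpha,1-\beta})$. Because $R$ is convex (quadratic with positive leading coefficient) in $\beta$, Jensen's inequality gives $R(\theta; f_C, \hat\theta_{\alpha,1/2}) \le \tfrac12 R(\theta;f_C,\hat\theta_{\alpha,\beta}) + \tfrac12 R(\theta;f_C,\hat\theta_{\alpha,1-\beta})$, and taking the supremum over $\theta$ shows that $\beta = \tfrac12$ never increases the worst-case risk. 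This justifies fixing $\beta_\mathrm{M} = \tfrac12$ for every $\alpha$.

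For the second stage, with $\beta = \tfrac12$ the bias collapses to $(2\alpha n C - 1)(\theta - \tfrac12)$, so writing $t \triangleq (\theta - \tfrac12)^2 \in [0, \tfrac14]$ and using $\theta(1-\theta) = \tfrac14 - t$ turns \eqref{eq:bernoulli-affine-estimator-risk} into $\alpha^2(nC^2 + \sigma_\mathrm{n}^2) + t\,[(2\alpha nC-1)^2 - 4\alpha^2 nC^2]$, which is \emph{affine} in $t$. The supremum over $\theta \in [0,1]$ is therefore attained at an endpoint of $[0,\tfrac14]$, giving $\sup_\theta R = \max\{R_0(\alpha), R_{1/4}(\alpha)\}$ with $R_0(\alpha) = \alpha^2(nC^2+\sigma_\mathrm{n}^2)$ (from $\theta=\tfrac12$) and $R_{1/4}(\alpha) = \alpha^2(n^2C^2 + \sigma_\mathrm{n}^2) - \alpha nC + \tfrac14$ (from $\theta \in \{0,1\}$).

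It remains to minimize the maximum of these two upward parabolas over $\alpha$. I would compute the two crossing points where $R_{1/4}=R_0$, i.e.\ the roots of $(2\alpha nC-1)^2 - 4\alpha^2 nC^2$, which are $\alpha_- = \frac{1}{2\sqrt n C(\sqrt n+1)}$ and $\alpha_+ = \frac{1}{2\sqrt n C(\sqrt n - 1)}$, noting that $R_{1/4} < R_0$ precisely on $(\alpha_-,\alpha_+)$; and the unconstrained minimizer of $R_{1/4}$, namely $\alpha^* = \frac{nC}{2(n^2C^2+\sigma_\mathrm{n}^2)}$. The outcome hinges on the position of $\alpha^*$ relative to $\alpha_-$, and a short calculation shows $\alpha^* = \alpha_-$ exactly when $\sigma_\mathrm{n}^2 = n^{3/2}C^2$. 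When $\sigma_\mathrm{n}^2 \ge n^{3/2}C^2$ one has $\alpha^* \le \alpha_-$, so $\alpha^*$ falls in the region where the maximum equals $R_{1/4}$ and the minimax scale is $\alpha^*$; when $\sigma_\mathrm{n}^2 \le n^{3/2}C^2$ one has $\alpha^* \ge \alpha_-$, both branches are minimized at the crossing, and the minimax scale is $\alpha_-$. Reading off $\max\{R_0,R_{1/4}\}$ at these minimizers reproduces \eqref{eq:bernoulli-affine-estimator-symmetric-encoder-minimax-alpha} and \eqref{eq:bernoulli-affine-estimator-symmetric-encoder-minimax-risk}. The main obstacle is this final case analysis: one must correctly track which parabola dominates on which interval and verify that the proposed $\alpha_\mathrm{M}$ is the global minimizer of the maximum in each regime, including ruling out $\alpha < 0$ (which forces $R_{1/4} > \tfrac14$) and handling the degenerate $n=1$ case (where $\alpha_+ = \infty$ and the relevant quadratic becomes linear); everything else is routine algebra.
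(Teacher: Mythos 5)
Your proposal is correct, and it reaches the result by a genuinely different route from the paper. The paper's proof is a guess-and-verify case analysis: it posits $(\alpha_\mathrm{M},\beta_\mathrm{M})$ up front, records three facts about the risk at $\alpha_\mathrm{lo}$, $\alpha_\mathrm{hi}$ and $\theta\in\{0,\tfrac12,1\}$, and then shows in three cases (organized by $\alpha\gtrless\alpha_\mathrm{lo}$ and $\beta\gtrless\tfrac12$) that any other $(\alpha,\beta)$ incurs at least as much worst-case risk at one of those test points. You instead \emph{derive} the answer: the $\theta\mapsto 1-\theta$ symmetry plus convexity in $\beta$ dispatches the offset once and for all via Jensen (the paper has no such step; it folds $\beta\ne\tfrac12$ into its cases), and the observation that the risk is affine in $t=(\theta-\tfrac12)^2$ makes it transparent that the worst-case $\theta$ is always $\tfrac12$ or an endpoint, reducing the problem to minimizing the pointwise maximum of the two parabolas $R_0$ and $R_{1/4}$. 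The quantities that appear --- the crossing point $\alpha_-=\alpha_\mathrm{lo}$, the vertex $\alpha^*=\alpha_\mathrm{hi}$, and the threshold $\sigma_\mathrm{n}^2=n^{3/2}C^2$ --- are the same in both proofs, and your computations of $R_0$, $R_{1/4}$, the roots, and the two regime values all check out against the paper's facts \ref{itm:bernoulli-parameter-low-noise-risk-constant}--\ref{itm:bernoulli-parameter-high-noise-risk-convex-in-alpha}. What your approach buys is an explanation of \emph{why} these are the right constants and a cleaner identification of the worst-case parameters; what the paper's buys is brevity once the answer is known, and (via its strict inequalities) a uniqueness claim that your Jensen step, being only a non-strict inequality, does not by itself deliver --- a cosmetic point, since the proposition's substance is the minimax value and an estimator attaining it. Your flagged caveats ($\alpha<0$, the $n=1$ degeneracy where $\alpha_+=\infty$) are real but routine, and you handle them correctly.
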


\begin{proof}
    Define $\alpha_\mathrm{lo} = \frac{1}{2 \sqrt{n} C (\sqrt n + 1)}$ and $\alpha_\mathrm{hi} = \frac{n C}{2(\sigma_\mathrm{n}^2 + n^2 C^2)}.$
    Note that then $\alpha_\mathrm{M} = \min\{\alpha_\mathrm{lo}, \alpha_\mathrm{hi}\}$. For convenience, and with some abuse of notation, let $R(\theta; f_C, \alpha, \beta)$ refer to the expression in \eqref{eq:bernoulli-affine-estimator-risk}. We will repeatedly use the facts that:
    \begin{enumerate}[label=(\alph*)]
        \item $R(\theta; f_C, \alpha_\mathrm{lo}, \beta_\mathrm{M})$ is constant with respect to $\theta$.
        \label{itm:bernoulli-parameter-low-noise-risk-constant}
        \item $R(\theta; f_C, \alpha_\mathrm{hi}, \beta_\mathrm{M})$ is convex in $\theta$ and minimized at $\theta \in \{0, 1\}$, at which the risk is equal.
        \label{itm:bernoulli-parameter-high-noise-risk-convex}
        \item $R(0; f_C, \alpha, \beta_\mathrm{M})$ is convex in $\alpha$ and minimized at $\alpha = \alpha_\mathrm{hi}$.
        \label{itm:bernoulli-parameter-high-noise-risk-convex-in-alpha}
    \end{enumerate}
    \vspace{-2ex}
    These can all be verified by appropriate substitutions into \eqref{eq:bernoulli-affine-estimator-risk}. Where we invoke these facts, we will label the equality or inequality signs accordingly.

    We will show that for every other choice $(\alpha, \beta)$, there exists some $\theta \in [0, 1]$ with risk exceeding $\sup_\theta R(\theta; \alpha_\mathrm{M}, \beta_\mathrm{M})$. We divide into three cases.

    \textit{Case 1:} $\alpha > \alpha_\mathrm{lo}$, or $\alpha = \alpha_\mathrm{lo}$ and $\beta \ne \frac12$. In this case, take $\theta = \frac12$ and we have
    \begin{align*}
        R(\tfrac12; f_C, \alpha, \beta)
        &= \alpha^2 ( nC^2 + \sigma_\mathrm{n}^2 ) + (\beta - \tfrac12)^2 \\
        &> \alpha_\mathrm{lo}^2 ( nC^2 + \sigma_\mathrm{n}^2 ) = R(\tfrac12; f_C, \alpha_\mathrm{lo}, \beta_\mathrm{M}).
    \end{align*}
    Then, if $\alpha_\mathrm{M} = \alpha_\mathrm{lo}$, then by \ref{itm:bernoulli-parameter-low-noise-risk-constant}, the right-hand side is equal to $\sup_\theta R(\theta; f_C, \alpha_\mathrm{M}, \beta_\mathrm{M})$. If $\alpha_\mathrm{M} = \alpha_\mathrm{hi}$, then note that
    \begin{align*}
        R(\tfrac12; f_C, \alpha_\mathrm{lo}, \beta_\mathrm{M})
        &\labelsign{a}= R(0; f_C, \alpha_\mathrm{lo}, \beta_\mathrm{M}) \\
        &\labelsign{c}\ge R(0; f_C, \alpha_\mathrm{hi}, \beta_\mathrm{M})
        \labelsign{b}= \sup_\theta R(\theta; f_C, \alpha_\mathrm{M}, \beta_\mathrm{M}),
    \end{align*}
    where labeled steps refer to corresponding facts above.

    \textit{Case 2:} $\alpha < \alpha_\mathrm{lo}$ and $\beta \ge \frac12$. Take $\theta = 0$ and we have
    \begin{align*}
        R(0; f_C, \alpha, \beta)
        &= \alpha^2 \sigma_\mathrm{n}^2 + ( \beta - \alpha n C )^2 \\
        &\ge \alpha^2 \sigma_\mathrm{n}^2 + ( \tfrac12 - \alpha n C )^2
        = R(0; f_C, \alpha, \beta_\mathrm{M}),
    \end{align*}
    where in the inequality we used the fact that $\alpha n C < \alpha_\mathrm{lo} n C = \frac{\sqrt n}{2(\sqrt n + 1)} < \frac12$. Then, if $\alpha_\mathrm{M} = \alpha_\mathrm{lo}$, we also have $\alpha_\mathrm{lo} < \alpha_\mathrm{hi}$, and by fact \ref{itm:bernoulli-parameter-high-noise-risk-convex-in-alpha}, is strictly decreasing in $\alpha$ for all $\alpha < \alpha_\mathrm{lo}$, thus
    \begin{equation*}
        R(0; f_C, \alpha, \beta_\mathrm{M}) > R(0; f_C, \alpha_\mathrm{lo}, \beta_\mathrm{M}) \labelsign{a}= \sup_\theta R(\theta; f_C, \alpha_\mathrm{M}, \beta_\mathrm{M}).
    \end{equation*}
    If $\alpha_\mathrm{M} = \alpha_\mathrm{hi}$, then we have
    \begin{equation*}
        R(0; f_C, \alpha, \beta_\mathrm{M})
        \labelsign{c}\ge R(0; f_C, \alpha_\mathrm{hi}, \beta_\mathrm{M})
        \labelsign{b}= \sup_\theta R(\theta; f_C, \alpha_\mathrm{M}, \beta_\mathrm{M}).
    \end{equation*}

    \textit{Case 3:} $\alpha < \alpha_\mathrm{lo}$ and $\beta \le \frac12$. Take $\theta = 1$ and argue similarly to case 2 that $R(1; f_C, \alpha, \beta_\mathrm{M}) > \sup_\theta R(\theta; f_C, \alpha_\mathrm{M}, \beta_\mathrm{M})$.
\end{proof}

\begin{lemma}
    \label{lem:bernoulli-parameter-model-offset-does-nothing}
    In the scalar Bernoulli parameter model, consider the scheme $(f, \hat\theta)$, in which all senders use the encoding function $f(0) = A,\ f(1) = B$, and the receiver uses the estimator $\hat\theta$. Then there exists a scheme $(f', \hat\theta')$ satisfying $f'(0) = -f'(1)$ and with minimax risk equal to that of $(f, \hat\theta)$.
\end{lemma}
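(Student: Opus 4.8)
The plan is to subtract off the mean of the two encoder outputs, turning the general encoder into a symmetric one, and to compensate for the resulting deterministic shift in the received signal at the estimator. Concretely, I would set $c = \frac{A+B}{2}$ and define $f'(u) = f(u) - c$, so that $f'(0) = \frac{A-B}{2}$ and $f'(1) = \frac{B-A}{2}$. By construction $f'(0) = -f'(1)$, as required by the statement.

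Next I would track how this substitution changes the channel output. Under the original scheme the receiver observes $Y = \sum_{i=1}^n f(U_i) + Z$, whereas under the modified encoder it observes $Y' = \sum_{i=1}^n f'(U_i) + Z = Y - nc$, since each encoder output is shifted by the same constant $c$. The offset $nc = \frac{n(A+B)}{2}$ is a fixed number known to the receiver, who knows $A$, $B$, and $n$; in particular it is independent of $\theta$. Thus $Y$ and $Y'$ differ only by this deterministic, $\theta$-independent shift, and for every $\theta$ the random variable $Y' + nc$ has exactly the same distribution (under $p_\theta$) as the original $Y$.

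I would then define the compensating estimator $\hat\theta'(y) = \hat\theta(y + nc)$. Because $Y' + nc$ and $Y$ have the same distribution under $p_\theta$, the estimate $\hat\theta'(Y') = \hat\theta(Y' + nc)$ has the same distribution as $\hat\theta(Y)$, for every $\theta$. Consequently
\begin{equation*}
    R(\theta; f', \hat\theta') = \mathbb{E}_\theta \lVert \hat\theta'(Y') - \theta \rVert_2^2 = \mathbb{E}_\theta \lVert \hat\theta(Y) - \theta \rVert_2^2 = R(\theta; f, \hat\theta)
\end{equation*}
at every $\theta$, so the two schemes have identical risk functions, and in particular identical worst-case (minimax) risk.

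I do not expect a genuine obstacle here; the lemma is essentially a change-of-variables argument, and the only care needed is to confirm that the offset absorbed into $\hat\theta'$ is deterministic and $\theta$-independent, so that the risks coincide pointwise in $\theta$ rather than merely at the maximizer. For the lemma to feed cleanly into the proof of Proposition~\ref{prop:scalar-bernoulli-minimax-affine-scheme}, I would also record two by-products of the construction: if $\hat\theta$ is affine then so is $\hat\theta'$, since the shift only adjusts the intercept; and the modified encoder never violates the power constraint, because its per-symbol power $\mathbb{E}_\theta[f'(U_i)^2] = \bigl(\frac{B-A}{2}\bigr)^2$ is at most $P$ whenever $\max\{A^2, B^2\} \le P$, using $\lvert B - A \rvert \le \lvert A \rvert + \lvert B \rvert \le 2\sqrt{P}$.
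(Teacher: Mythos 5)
Your proposal is correct and matches the paper's proof essentially verbatim: both subtract the midpoint $\frac{A+B}{2}$ from the encoder, observe that $Y' = Y - n\frac{A+B}{2}$, and compensate with $\hat\theta'(y) = \hat\theta(y + n\frac{A+B}{2})$ so the risk functions coincide pointwise. Your added remarks on preservation of affineness and the power constraint are not in the paper's proof but are harmless and indeed useful for how the lemma is invoked later.
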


\begin{proof}
    Choose $C = \frac{B - A}{2}$, so that $f'(u) \triangleq f_C(u) = f(u) - \frac{A+B}{2}$. By construction, $f'(0) = -f'(1) = \frac{A - B}{2}$. Then, if $Y$ and $Y'$ are what the receiver observes under $f$ and $f'$ respectively, we have
    $
        Y'
        = \sum_i f'(U_i) + Z
        = \sum_i [f(U_i) - \tfrac{A+B}{2}] + Z
        = Y - n\tfrac{A+B}{2}.
    $
    We can then define $\hat\theta'(Y') \triangleq \hat\theta(Y' + n\tfrac{A+B}{2})$, and this will have exactly the same statistical properties as $\hat\theta(Y)$.
\end{proof}

Now we may complete the proof of Proposition~\ref{prop:scalar-bernoulli-minimax-affine-scheme}.

\begin{proof}[Proof of Proposition~\ref{prop:scalar-bernoulli-minimax-affine-scheme}]
    Because Lemma~\ref{lem:bernoulli-parameter-model-offset-does-nothing} shows there is no sacrifice in minimax risk, it suffices to consider just schemes using encoding functions of the form $f_C$ in \eqref{eq:bernoulli-parameter-encoding-function}. The minimax affine estimator for such encoding functions is found in Proposition~\ref{prop:bernoulli-minimax-affine-estimator}. From \eqref{eq:bernoulli-affine-estimator-symmetric-encoder-minimax-risk}, the minimax risk for $f_C$ is strictly decreasing in $C^2$. Therefore, to minimize over all encoding functions $f_C$, we take the highest-magnitude $C$ satisfying the power constraint \eqref{eq:power-constraint-in-scheme}, $C = \pm \sqrt P$.
\end{proof}

The extension of this result to the product Bernoulli model is then an application of the scalar case on a per-sample basis.

\begin{proof}[Proof of Theorem~\ref{thm:product-bernoulli-affine-estimator-minimax-risk}]
    Because each dimension $1, \dots, d$ is independent, each dimension can be optimized separately. Each sender transmits its $j$th sample $[f_C(U_i)]_j$ using the scheme from Proposition~\ref{prop:scalar-bernoulli-minimax-affine-scheme}. The even division of power still satisfies the average power constraint \eqref{eq:power-constraint-in-scheme}. The minimax risk is then $d$ times the minimax risk along one dimension.
\end{proof}

Finally, Corollary~\ref{cor:product-bernoulli-repetition-scheme} follows using Lemma~\ref{lem:repetition-schemes-for-affine-estimators} again.

\begin{proof}[Proof of Corollary~\ref{cor:product-bernoulli-repetition-scheme}]
    Apply Lemma~\ref{lem:repetition-schemes-for-affine-estimators} to Theorem~\ref{thm:product-bernoulli-affine-estimator-minimax-risk}, with $m = \lfloor s/d \rfloor$ and ignoring the leftover channel uses.
\end{proof}

\section{Proofs of Analog Lower Bounds}
\label{sec:proofs-of-analog-lower-bounds}

\subsection{Preliminaries}
\label{sec:proofs-of-analog-lower-bounds-preliminaries}

We first present two results that are key to our main theorem. These results characterize the worst-case risk in terms of the trace of the Fisher information matrix, and in turn in terms of Shannon information. First, Equation 8 of \cite{barnes-lowerbounds} tells us the following, which we list as a lemma here.

\begin{lemma}
    \label{lem:barnes-squared-error-risk}
    Suppose $[-B, B]^d \subset \Theta$. For any estimator $\hat\theta(Y_1, \dots, Y_n)$, the worst-case squared error risk must satisfy
    \begin{equation}
        \sup_{\theta \in \Theta} \mathbb{E} \lVert \hat\theta(Y) - \theta \rVert^2 \ge \frac{d^2}{\sup_{\theta \in \Theta} \mathbf{tr}(I_Y(\theta)) + \frac{d\pi^2}{B^2}}.
    \end{equation}
\end{lemma}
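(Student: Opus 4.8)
The plan is to derive this as a multivariate Bayesian Cram\'er--Rao (van Trees) inequality, instantiated with a prior supported on $[-B,B]^d$ whose Fisher information is as small as possible. The starting observation is that for any prior density $\lambda$ supported on $[-B,B]^d \subset \Theta$, the worst-case risk dominates the Bayes risk,
\[
\sup_{\theta\in\Theta}\mathbb{E}_\theta\lVert\hat\theta-\theta\rVert^2 \ge \int_{[-B,B]^d} \mathbb{E}_\theta\lVert\hat\theta-\theta\rVert^2\,\lambda(\theta)\,d\theta,
\]
so it suffices to lower bound the right-hand side and then optimize over the choice of $\lambda$.

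For the Bayes risk, I would work with the joint density $p(y,\theta)=p_\theta(y)\lambda(\theta)$ of the channel output $Y$ and the random parameter. For each coordinate $j$, integrating by parts in $\theta_j$ and using that $\partial_{\theta_j}(\hat\theta_j-\theta_j)=-1$ (the estimator does not depend on $\theta$) gives the identity $\mathbb{E}[(\hat\theta_j-\theta_j)\,\partial_{\theta_j}\log p(Y,\theta)]=1$, provided the boundary terms vanish. Summing these $d$ identities and applying Cauchy--Schwarz to the random vectors $(\hat\theta_j-\theta_j)_j$ and $(\partial_{\theta_j}\log p(Y,\theta))_j$ yields
\[
d^2 \le \mathbb{E}\lVert\hat\theta-\theta\rVert^2 \cdot \mathbb{E}\bigl\lVert\nabla_\theta\log p(Y,\theta)\bigr\rVert^2.
\]
Because $\nabla_\theta\log p(Y,\theta)=\nabla_\theta\log p_\theta(Y)+\nabla_\theta\log\lambda(\theta)$ and the channel score $\nabla_\theta\log p_\theta(Y)$ has zero conditional mean, the cross term vanishes and the second factor splits as $\mathbb{E}_\lambda[\mathbf{tr}(I_Y(\theta))]+\mathcal{I}(\lambda)$, where $\mathcal{I}(\lambda)=\mathbb{E}_\lambda\lVert\nabla_\theta\log\lambda(\theta)\rVert^2$ is the Fisher information of the prior. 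Upper bounding $\mathbb{E}_\lambda[\mathbf{tr}(I_Y(\theta))]\le\sup_{\theta\in\Theta}\mathbf{tr}(I_Y(\theta))$ then gives the Bayes-risk bound $d^2/\bigl(\sup_\theta\mathbf{tr}(I_Y(\theta))+\mathcal{I}(\lambda)\bigr)$.

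It remains to choose a prior minimizing $\mathcal{I}(\lambda)$ over densities on $[-B,B]^d$ vanishing on the boundary. I would take the product cosine prior $\lambda(\theta)=\prod_{j=1}^d \frac1B\cos^2\!\bigl(\frac{\pi\theta_j}{2B}\bigr)$, the classical minimizer; a direct computation gives $\mathcal{I}(\lambda)=d\pi^2/B^2$, and this density vanishes at $\theta_j=\pm B$, legitimizing the integration by parts above. Substituting produces exactly the claimed bound. The main obstacle is the analytic justification of the van Trees step---namely verifying that the boundary terms in the integration by parts vanish (guaranteed by the cosine prior) and that the regularity conditions \ref{itm:regularity-density-differentiable}--\ref{itm:regularity-fisher-continuous} permit differentiating under the integral and interchanging the score and estimator operations---rather than the prior-information computation, which is routine once the cosine prior is fixed.
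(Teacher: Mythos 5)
Your proof is correct. The paper does not prove this lemma itself --- it imports it verbatim as Equation~8 of \cite{barnes-lowerbounds} --- and your van Trees argument with the product cosine-squared prior $\prod_j \frac1B\cos^2(\pi\theta_j/2B)$, whose prior Fisher information is exactly $d\pi^2/B^2$, is precisely the derivation underlying that cited equation, with all the key steps (domination of Bayes risk, the integration-by-parts identity with vanishing boundary terms, Cauchy--Schwarz, and the orthogonal splitting of the joint score) handled correctly.
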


We will also lean on the following theorem, due to \cite{barnes-fisher-arxiv}.

\begin{theorem}
    \label{thm:fisher-information-under-shannon-constraint}
    Suppose that $\langle u, S_\theta(X) \rangle$ is sub-Gaussian with parameter $\rho$ for any unit vector $u \in \mathbb{R}^d$. Under regularity conditions \ref{itm:regularity-density-differentiable}--\ref{itm:regularity-channel-squareintegrable}, $\mathbf{tr}(I_Y(\theta)) \le 2\rho^2 I_\theta(X; Y)$.
\end{theorem}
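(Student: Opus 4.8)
The plan is to bound the output Fisher information directly by a variational (Donsker--Varadhan) lower bound on the mutual information, with sub-Gaussianity of the score supplying the needed control on a log-moment-generating function. The whole argument rests on two ingredients: first, that $\mathbf{tr}(I_Y(\theta))$ equals the second moment of a conditional expectation of the \emph{input} score, and second, a change-of-measure inequality that converts that second moment into mutual information.

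First I would establish the \emph{score projection identity}. Since the channel law $p(y \mid x)$ does not depend on $\theta$, writing $p_\theta(y) = \int p(y\mid x)\,p_\theta(x)\,dx$ and differentiating under the integral sign gives $S_\theta(y) \triangleq \nabla_\theta \log p_\theta(y) = \mathbb{E}[S_\theta(X)\mid Y=y]$, i.e.\ the output score is the conditional expectation of the input score. This is exactly the step that consumes the regularity conditions \ref{itm:regularity-density-differentiable}--\ref{itm:regularity-channel-squareintegrable}: differentiability of $\sqrt{p_\theta}$ and continuity and finiteness of the Fisher information justify interchanging $\nabla_\theta$ with the integral, while square-integrability of $p(y\mid x)$ ensures the relevant densities and expectations are finite. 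Writing $M(y) \triangleq \mathbb{E}[S_\theta(X)\mid Y=y]$, the identity yields $\mathbf{tr}(I_Y(\theta)) = \mathbb{E}\lVert M(Y)\rVert_2^2$.

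Next I would lower-bound $I_\theta(X;Y) = \mathbb{E}_Y\!\left[ D(p_{X\mid Y} \Vert p_X) \right]$ pointwise in $y$. For each $y$ and each $\lambda \in \mathbb{R}^d$, the Donsker--Varadhan representation of KL divergence applied to the linear test function $x \mapsto \langle \lambda, S_\theta(x)\rangle$ gives
\begin{equation*}
    D(p_{X\mid Y=y}\Vert p_X) \ge \langle \lambda, M(y)\rangle - \log \mathbb{E}_X\!\left[ e^{\langle \lambda, S_\theta(X)\rangle} \right].
\end{equation*}
The mean-zero property $\mathbb{E}[S_\theta(X)] = 0$ together with the hypothesis that $\langle \lambda/\lVert\lambda\rVert_2, S_\theta(X)\rangle$ is sub-Gaussian with parameter $\rho$ bounds the log-moment-generating term by $\tfrac12 \lVert\lambda\rVert_2^2\rho^2$. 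Maximizing the resulting quadratic $\langle \lambda, M(y)\rangle - \tfrac12\lVert\lambda\rVert_2^2\rho^2$ over $\lambda$ (the maximizer is $\lambda = M(y)/\rho^2$) produces $D(p_{X\mid Y=y}\Vert p_X) \ge \lVert M(y)\rVert_2^2/(2\rho^2)$.

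Finally I would take the expectation over $Y$ and combine with the identity from the first step:
\begin{equation*}
    I_\theta(X;Y) \ge \frac{1}{2\rho^2}\,\mathbb{E}\lVert M(Y)\rVert_2^2 = \frac{1}{2\rho^2}\,\mathbf{tr}(I_Y(\theta)),
\end{equation*}
which rearranges to the claim. I expect the genuine obstacle to be the first step rather than the variational computation: making the differentiation-under-the-integral and the score projection identity rigorous is precisely what the regularity conditions are designed to support, and it is where care is needed to guarantee the interchange is valid and every expectation appearing later is finite. The variational bound itself is then a short, self-contained convex-duality argument.
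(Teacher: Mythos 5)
Your proof is correct: the score-projection identity $S_\theta(y)=\mathbb{E}[S_\theta(X)\mid Y=y]$, the Donsker--Varadhan bound with the linear test function $\langle\lambda,S_\theta(x)\rangle$, the sub-Gaussian log-MGF bound $\tfrac12\lVert\lambda\rVert_2^2\rho^2$, and the optimization at $\lambda=M(y)/\rho^2$ fit together exactly as you describe. Note that this paper does not prove the theorem at all---it imports it verbatim from the cited reference \cite{barnes-fisher-arxiv}---and your argument is essentially the one given there, so you have in effect supplied the missing proof rather than deviated from it.
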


\subsection{Proof of main lower bound}
\label{sec:proof-of-main-lower-bound}

We first provide an upper bound for the mutual information between the channel input and output.
For brevity we omit the proof, which can be derived using standard results in information theory.

\begin{proposition}
    \label{prop:gaussian-mac-shannon-information-weak-bound-independent-inputs}
    Consider the Gaussian multiple-access channel with $s$ channel uses, $Y = X_1 + \cdots + X_n + Z$, $Z \sim \mathcal{N}(0, \sigma_\mathrm{n}^2I_s)$, with a power constraint $\frac1s \mathbb{E}[\lVert X_i \rVert^2] \le P$. If $X_1, \dots, X_n$ are independent, then the Shannon information between its input $(X_1, \dots, X_n)$ and its output $Y$ is bounded by
    \begin{equation}
        I(X_1, \dots, X_n; Y) \le \frac{s}{2} \log_2\left(1 + \frac{n P}{\sigma_\mathrm{n}^2}\right).
    \end{equation}
\end{proposition}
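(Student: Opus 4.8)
The plan is to express the mutual information as a difference of differential entropies, $I(X_1,\dots,X_n;Y) = h(Y) - h(Y \mid X_1,\dots,X_n)$, and to bound each term separately. Since this is a standard Gaussian-channel capacity computation, I expect to lean on three classical facts: that the Gaussian distribution maximizes differential entropy among all distributions with a given covariance; that the determinant of a covariance matrix is bounded by the $s$th power of its normalized trace (AM--GM applied to the eigenvalues); and the translation-invariance of differential entropy.

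For the conditional term, I would observe that conditioned on $(X_1,\dots,X_n)$ the output $Y = \sum_i X_i + Z$ is merely the noise $Z$ shifted by a constant, so $h(Y \mid X_1,\dots,X_n) = h(Z) = \frac{s}{2}\log_2(2\pi e\, \sigma_\mathrm{n}^2)$. For the output term, let $\Sigma_Y$ denote the covariance of $Y$. Maximum entropy of the Gaussian gives $h(Y) \le \frac12 \log_2\big((2\pi e)^s \det \Sigma_Y\big)$, and the determinant--trace inequality then yields $h(Y) \le \frac{s}{2}\log_2\!\big(2\pi e\,\tfrac{1}{s}\mathbf{tr}(\Sigma_Y)\big)$. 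It remains to bound $\mathbf{tr}(\Sigma_Y)$, and here independence of the inputs is essential: because $Z$ is independent of the $X_i$ and the $X_i$ are independent of one another, the covariances add, so $\mathbf{tr}(\Sigma_Y) = \sum_i \mathbf{tr}(\mathrm{Cov}(X_i)) + \mathbf{tr}(\mathrm{Cov}(Z))$. Each centered term satisfies $\mathbf{tr}(\mathrm{Cov}(X_i)) = \mathbb{E}[\lVert X_i - \mathbb{E}X_i\rVert^2] \le \mathbb{E}[\lVert X_i\rVert^2] \le sP$ by the power constraint, while $\mathbf{tr}(\mathrm{Cov}(Z)) = s\sigma_\mathrm{n}^2$, giving $\tfrac1s\mathbf{tr}(\Sigma_Y) \le nP + \sigma_\mathrm{n}^2$.

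Combining, $I \le \frac{s}{2}\log_2\big(2\pi e(nP + \sigma_\mathrm{n}^2)\big) - \frac{s}{2}\log_2(2\pi e\,\sigma_\mathrm{n}^2) = \frac{s}{2}\log_2(1 + nP/\sigma_\mathrm{n}^2)$, as claimed. I do not expect any single step to be a serious obstacle---the result is essentially the converse for the sum rate of the Gaussian MAC---but the points requiring the most care are that the power constraint controls the uncentered second moment whereas the entropy bound wants the centered covariance (harmless, since variance is at most the second moment and differential entropy is translation-invariant, so one may center the $X_i$ without loss), and that it is \emph{independence}, not merely the marginal power constraints, that makes the input covariances additive so that the trace picks up no cross terms. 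Without independence one would have to account for correlations between senders, which is precisely why that hypothesis is imposed.
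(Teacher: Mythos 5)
Your proof is correct: the decomposition $I = h(Y) - h(Y\mid X_1,\dots,X_n)$, the maximum-entropy and determinant--trace bounds, and the additivity of covariances under independence together give exactly the claimed bound, and you correctly handle the centered-versus-uncentered second moment issue. The paper omits the proof entirely, deferring to ``standard results in information theory,'' and your argument is precisely the standard Gaussian MAC sum-rate converse it has in mind.
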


We now have all of the ingredients necessary to prove Theorem~\ref{thm:main-lower-bound}, which uses the data processing inequality to chain the above results together.

\begin{proof}[Proof of Theorem~\ref{thm:main-lower-bound}]
    Recall that $X_i = f(U_i), i = 1, \dots, n$ and $Y$ is the output of the channel $p_{Y|X}(y|x_1, \dots, x_n)$ with inputs $X_1, \dots, X_n$. The conditional distribution of $Y$ given $U$ can be expressed in terms of the channel's conditional distribution,
    \begin{equation}
        \label{eq:u-to-y-is-a-channel}
        p_{Y|U}(y|u_1, \dots, u_n) = p_{Y|X}(y|f(u_1), \dots, f(u_n)).
    \end{equation}
    That is, we have a channel from $U$ to $Y$. (Note that this doesn't require invertibility in $f$, since it is in the condition, and $p_{Y|X}$ is defined by assumption.) To verify that this ``channel'' satisfies regularity condition \ref{itm:regularity-channel-squareintegrable}, note that $p_{Y|X}$ is bounded,
    \begin{equation}
        \label{eq:gaussian-mac-conditional-density-bounded}
        p_{Y|X}(y|x_1, \dots, x_n) \le \frac{1}{\sqrt{(2\pi\sigma_\mathrm{n}^2)^n}} \triangleq M,
    \end{equation}
    so chaining \eqref{eq:u-to-y-is-a-channel} and \eqref{eq:gaussian-mac-conditional-density-bounded} verifies that
    \begin{align*}
        &\int p_{Y|U}(y|u_1, \dots, u_n)^2 \,dp_U(u_1, \dots, u_n) \\
        &\qquad\le \int M^2 \,dp_U(u_1, \dots, u_n) = M^2 < \infty.
        \numberthis \label{eq:gaussian-mac-satisfies-regularity-channel-squareintegrable}
    \end{align*}
    We therefore satisfy the requirements to invoke Theorem~\ref{thm:fisher-information-under-shannon-constraint}, so long as we can establish that $\langle v, S_\theta(U_1, \dots, U_n) \rangle$ is sub-Gaussian for all unit vectors $v \in \mathbb{R}^d$. Note that since $U_1, \dots, U_n$ are independent,
    \begin{equation*}
        S_\theta(U_1, \dots, U_n)
        = \sum_{i=1}^n \nabla_\theta \log p_\theta(U_i) = \sum_{i=1}^n S_\theta(U_i).
    \end{equation*}
    Then for every unit vector $v \in \mathbb{R}^d$,
    \begin{equation*}
        \langle v, S_\theta(U_1, \dots, U_n) \rangle = \left\langle v, \sum_{i=1}^n S_\theta(U_i) \right\rangle = \sum_{i=1}^n \langle v, S_\theta(U_i) \rangle.
    \end{equation*}
    This is a sum of $n$ independent sub-Gaussian random variables each with parameter $\rho$, and is therefore sub-Gaussian with parameter $\sqrt n \rho$. Theorem~\ref{thm:fisher-information-under-shannon-constraint} thus gives
    \begin{equation}
        \label{eq:main-result-fisher-shannon}
        \mathbf{tr}(I_Y(\theta)) \le 2 n\rho^2 I_\theta(U_1, \dots, U_n; Y).
    \end{equation}
    Since $(U_1, \dots, U_n) \rightarrow (X_1, \dots, X_n) \rightarrow Y$ form a Markov chain, the data processing inequality implies that
    \begin{equation}
        \label{eq:main-result-data-processing}
        I_\theta(U_1, \dots, U_n; Y) \le I_\theta(X_1, \dots, X_n; Y).
    \end{equation}
    Now, $U_1, \dots, U_n$ are independent (by definition), and each $X_i, i = 1, \dots, n$ is a function of the corresponding $U_i$. Therefore, $X_1, \dots, X_n$ are also independent, and from Proposition~\ref{prop:gaussian-mac-shannon-information-weak-bound-independent-inputs}, we have
    \begin{equation}
        \label{eq:main-result-gaussian-mac-shannon-information}
        I(X_1, \dots, X_n; Y) \le \frac{s}{2} \log_2\left(1 + \frac{n P}{\sigma_\mathrm{n}^2}\right).
    \end{equation}
    Putting \eqref{eq:main-result-fisher-shannon}, \eqref{eq:main-result-data-processing} and \eqref{eq:main-result-gaussian-mac-shannon-information} together yields
    \begin{equation}
        \mathbf{tr}(I_Y(\theta)) \le n\rho^2s \log_2\left(1 + \frac{n P}{\sigma_\mathrm{n}^2}\right).
    \end{equation}
    Substituting this expression into the result given by Lemma~\ref{lem:barnes-squared-error-risk} then yields
    \begin{equation*}
        \sup_{\theta \in \Theta} \mathbb{E} \lVert \hat\theta(Y) - \theta \rVert^2 \ge \frac{d^2}{n\rho^2s \log_2\left(1 + \frac{n P}{\sigma_\mathrm{n}^2}\right) + \frac{d\pi^2}{B^2}}. \qedhere
    \end{equation*}
\end{proof}

\subsection{Specific problem instances}
\label{sec:proof-lower-bound-specific-problem-instances}

To find lower bounds for the Gaussian location and product Bernoulli parameter models, we compute the sub-Gaussian parameters of their score functions and apply our main result.

\begin{proof}[Proof of Corollary~\ref{cor:gaussian-location-lower-bound}]
    The score function for a single sample $U_i$ is
    \begin{align*}
        S_\theta(u_i)
        &= \nabla_\theta \left[\frac{(u_i-\theta)^\mathsf{T} (u_i-\theta)}{2\sigma^2} - \log 2\pi\sigma\right] \\
        &= \frac{1}{\sigma^2} (u_i - \theta).
    \end{align*}
    Then, with $U_i \sim \mathcal{N}(\theta, \sigma^2 I_d)$, the score function of each sample $S_\theta(U_i)$ is Gaussian with mean zero and covariance $\frac{1}{\sigma^2} I_d$. It follows that for any unit vector $v$ and each sample $U_i$, $\langle v, S_\theta(U_i) \rangle$ is Gaussian with zero mean and variance $v^\mathsf{T} \frac{1}{\sigma^2} I_d v = \frac{1}{\sigma^2} v^\mathsf{T} v = \frac{1}{\sigma^2}$. This is sub-Gaussian with parameter $\frac{1}{\sigma}$, enabling an application of Theorem~\ref{thm:main-lower-bound}.
\end{proof}

\begin{proof}[Proof of Corollary~\ref{cor:product-bernoulli-lower-bound}]
    We can compute the score function of each sample in the product Bernoulli model to be $S_\theta(u_i) = (S_{\theta_1}(u_i), \dots, S_{\theta_j}(u_i))$, where
    \begin{equation*}
        S_{\theta_j}(u_i) = \begin{cases}
            \frac{1}{\theta_j}, &\text{ if }u_{ij} = 1\\
            -\frac{1}{1-\theta_j}, &\text{ if }u_{ij} = 0.
        \end{cases}
    \end{equation*}
    Then $S_{\theta_j}(U_i)$ is bounded, and therefore is sub-Gaussian with parameter
    \begin{equation*}
        \frac{1}{2}\left[\frac{1}{\theta_j} + \frac{1}{1 - \theta_j}\right] = \frac{1}{2\theta_j (1 - \theta_j)} \le \frac{1}{\frac12 - 2\varepsilon^2},
    \end{equation*}
    where the last step uses the fact that $\theta \in \Theta = [\frac12 - \varepsilon, \frac12 + \varepsilon]^d$.

    Being the sum of $n$ independent sub-Gaussians, for all unit vectors $v \in \mathbb{R}^d$, $\langle v, S_\theta(U_i) \rangle$ is sub-Gaussian with parameter
    \begin{equation}
        \sqrt{\sum_{j=1}^d v_j^2 \frac{1}{(\frac12 - 2\varepsilon^2)^2}} = \frac{1}{\frac12 - 2\varepsilon^2}.
    \end{equation}
    This gives us a value for $\rho$ to use in Theorem~\ref{thm:main-lower-bound}.

    For $B$, we may reparameterize the parameter space to $\Theta' = [-\varepsilon, \varepsilon]$ (so that the Bernoulli component means are $\theta = \theta' + \frac12$). We can then apply Theorem~\ref{thm:main-lower-bound} to arrive at Corollary~\ref{cor:product-bernoulli-lower-bound}.
\end{proof}

\section{Simulations}
\label{sec:simulations}

\begin{figure}
    \includegraphics[width=\columnwidth]{figures/simulation-glm-P0.1-B4.0-var10.0-noise1.0-upto1e6.tikz}
    \caption{Simulation of Gaussian location model with $\frac{P}{\sigma_\mathrm{n}^2} = 0.1, \frac{B}{\sigma^2} = 0.4$}
    \label{fig:simulation-glm-low-snr}
\end{figure}

\begin{figure}
    \includegraphics[width=\columnwidth]{figures/simulation-glm-P1000.0-B4.0-var10.0-noise1.0-upto1e6.tikz}
    \caption{Simulation of Gaussian location model with $\frac{P}{\sigma_\mathrm{n}^2} = 1000, \frac{B}{\sigma^2} = 0.4$}
    \label{fig:simulation-glm-high-snr}
\end{figure}

\begin{figure}
    \includegraphics[width=\columnwidth]{figures/simulation-pbm-P0.1-eps0.25-noise1.0-upto1e6.tikz}
    \caption{Simulation of product Bernoulli model with $\frac{P}{\sigma_\mathrm{n}^2} = 0.1$}
    \label{fig:simulation-pbm-low-snr}
\end{figure}

\begin{figure}
    \includegraphics[width=\columnwidth]{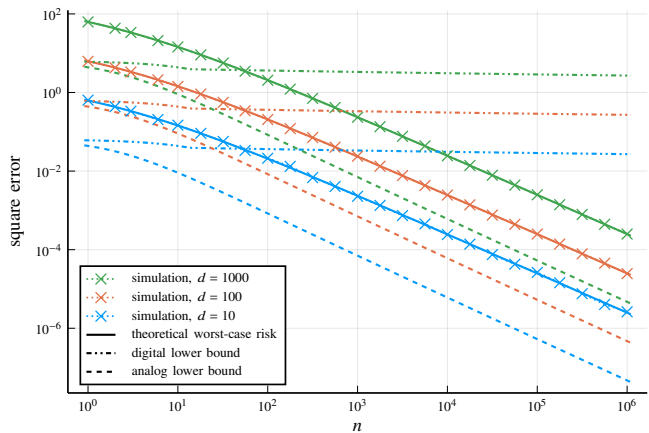}
    \caption{Simulation of product Bernoulli model with $\frac{P}{\sigma_\mathrm{n}^2} = 1000$}
    \label{fig:simulation-pbm-high-snr}
\end{figure}

We ran simulations to validate the schemes proposed in Section~\ref{sec:main-results-analog-achievability} for both of our models of interest, and we plot the results alongside each of our three theoretical results in Figs.~\ref{fig:simulation-glm-low-snr}--\ref{fig:simulation-pbm-high-snr}. Lines in the same color relate to the same value of $d$, with different patterns corresponding to different results.

The points with `$\times$' markers are the squared error of our proposed analog schemes, averaged over 100 simulations with $\theta$ drawn uniformly at random from the parameter space $\Theta$, at various values of $d$ and $n$, in the case where $s = d$. The solid lines are the theoretical results we presented in Section~\ref{sec:main-results-analog-achievability}, namely Theorems~\ref{thm:gaussian-location-minimax-scheme} and \ref{thm:product-bernoulli-affine-estimator-minimax-risk}, which are the worst-case risk over parameters $\theta \in \Theta$ for the schemes we proposed (which are minimax among the classes of estimators noted in those theorems).

Since the theoretical results are the \emph{worst-case} risk over the parameters $\theta \in \Theta$, we expect them to be at least as large as the \emph{average} squared error found in simulations with randomly sampled $\theta \in \Theta$. It turns out that in most cases, the worst-case and average risk are equal, because the risk of the minimax scheme is constant with respect to $\theta$. The one exception is the low-SNR regime ($\sigma_\mathrm{n}^2 > n^{3/2} P$, see Theorem~\ref{thm:product-bernoulli-affine-estimator-minimax-risk}) of the product Bernoulli model estimator (Fig.~\ref{fig:simulation-pbm-low-snr}), where the risk depends on $\theta$, leading to the observed difference between the theoretical line and the average over randomly drawn $\theta$ at low $n$.

The digital lower bounds of Propositions~\ref{prop:digital-lower-bound-gaussian-location-model} and \ref{prop:digital-lower-bound-product-bernoulli-model} are plotted in dash-dot-dot lines. It is here that we see the marked improvement discussed in Section~\ref{sec:discussion-and-comparison}. Since the error in digital schemes scales at best with $1/\log n$, the simulated analog schemes rapidly become significantly advantageous even in moderately large values of $n$.

Finally, the analog lower bounds of Corollaries~\ref{cor:gaussian-location-lower-bound} and \ref{cor:product-bernoulli-lower-bound} are plotted in dashed lines, and run a $\log n$ factor from the achievability and simulation results.

Recall that in the product Bernoulli model, both lower bounds depend on $\varepsilon$, where $\Theta = [\frac12 - \varepsilon, \frac12 + \varepsilon]^d$. Our achievability results and simulations, on the other hand, are for the full parameter space $\Theta = [0, 1]^d$. Note, however, that the lower bounds for any $\varepsilon < \frac12$, yielding $\Theta \subset [0, 1]^d$, also imply a lower bound for $[0, 1]^d$. To generate the plots in Figs.~\ref{fig:simulation-pbm-low-snr} and \ref{fig:simulation-pbm-high-snr}, we used $\varepsilon = \frac14$.

\section{Conclusions}
\label{sec:conclusions}

We introduced and studied a new model for minimax parameter estimation over the Gaussian multiple-access channel, developing estimation schemes for the Gaussian location model and product Bernoulli model. These ``analog'' estimation schemes directly leverage the superposition property of the Gaussian MAC, and our analysis of their risk under squared error loss showed that they exponentially outperform even lower bounds on the risk of ``digital'' schemes that separate the communication and estimation problems. We then derived new ``analog'' lower bounds for this estimation problem that are within a logarithmic factor of our achievability results. We confirmed our findings in simulations for both models. This adds theoretical insight to a growing body of literature on the advantages of analog schemes in over-the-air learning and inference, demonstrating that even fundamental limits of digital schemes can be beaten when estimation and communication are considered jointly.

This opens a number of further questions to be examined in future work. First, while the lower bounds we presented are general, our achievability results pertain specifically to the two estimation models we studied. With the additive nature of the Gaussian MAC, one would imagine that this extends to other mean estimation problems, but it raises the question of whether other estimation problems, such as distribution or empirical frequency estimation \cite{barnes-lowerbounds,han-geometric-lowerbounds,chen-trilemma}, can also harness this or other channels to achieve similarly startling gains over digital schemes.

Moreover, our estimation schemes work only when there is at least one channel use available for each parameter ($s \ge d$). In many applications, models may have too many parameters for this to be feasible, so the setting where $s < d$ is also important to study. Developing analog schemes for this regime that outperform digital approaches remains as future work.

\newcommand{\noopsort}[1]{}

\end{document}